\documentclass{article}

\usepackage{microtype}
\usepackage{graphicx}
\usepackage{subcaption}
\usepackage{booktabs} 

\usepackage{amssymb}
\usepackage{pifont}
\usepackage{multirow}
\usepackage{graphicx}
\usepackage{float}
\usepackage[shortlabels]{enumitem}
\usepackage{colortbl}
\usepackage{makecell}
\usepackage{caption}
\usepackage{listings}
\usepackage{xcolor}
\usepackage{wrapfig}
\usepackage{flushend}
\usepackage{tcolorbox}
\usepackage{algpseudocode}
\usepackage{amsmath,amssymb}
\usepackage{xcolor} 
\definecolor{myrow}{HTML}{E6F5FF} 

\usepackage{hyperref}

\usepackage[preprint]{icml2026}

\usepackage{amsmath}
\usepackage{amssymb}
\usepackage{mathtools}
\usepackage{amsthm}

\usepackage[capitalize,noabbrev]{cleveref}

\theoremstyle{plain}
\newtheorem{theorem}{Theorem}[section]

\newtheorem{lemma}[theorem]{Lemma}

\theoremstyle{definition}

\newtheorem{assumption}[theorem]{Assumption}
\theoremstyle{remark}
\newtheorem{remark}[theorem]{Remark}

\usepackage[textsize=tiny]{todonotes}

\icmltitlerunning{Boosting Multimodal Federated Learning via Chained Modality Optimization}

\begin{document}

\twocolumn[
  \icmltitle{Boosting Multimodal Federated Learning via Chained Modality Optimization}



  \icmlsetsymbol{equal}{*}

  \begin{icmlauthorlist}
    \icmlauthor{Zixin Zhang}{imu}
    \icmlauthor{Fan Qi}{imu,tjut}
    \icmlauthor{Shuai Li}{tjut}
    \icmlauthor{Xiaoshan Yang}{nlpr}
    \icmlauthor{Changsheng Xu}{nlpr}
  \end{icmlauthorlist}
  \icmlaffiliation{tjut}{School of Computer Science and Engineering, Tianjin University of Technology, Tianjin, China}
  \icmlaffiliation{nlpr}{Institute of Automation, Chinese Academy of Sciences, Beijing, China}
  \icmlaffiliation{imu}{College of Computer Science, Inner Mongolia University, Hohhot, Inner Mongolia, China}

  \icmlcorrespondingauthor{Fan Qi}{fanqi@email.tjut.edu.cn}




  \vskip 0.3in
]



\printAffiliationsAndNotice{}  
\begin{abstract}
Multimodal Federated Learning (MMFL) enables privacy-preserving collaborative learning across decentralized clients with heterogeneous data and modality availability.
However, most existing MMFL methods cast multimodal training as a joint optimization problem, overlooking a key bottleneck: modality competition, where dominant modalities suppress weaker ones and lead to suboptimal global models.
To address this, we propose \textsc{FedMChain}, a balanced MMFL framework that structures federated multimodal training as a chain of modality-wise phases.
This phase-wise design gives each modality a dedicated local optimization window on multimodal clients to mitigate modality competition, and further promotes cross-modal complementarity via an error-compensated regularizer.
On the server side, we employ a sparse sign-guided aggregation strategy that leverages directional sign agreement for robust intra-modality aggregation, avoids destructive averaging, and supports less frequent synchronization to reduce communication overhead.
Extensive experiments on multimodal benchmarks demonstrate that \textsc{FedMChain} consistently improves predictive performance while requiring less frequent communication than baselines.
\end{abstract}


\section{Introduction}
\label{sec:intro}


Multimodal Federated Learning (MMFL) serves as a privacy-preserving paradigm for collaborative training over distributed multimodal data silos, holding substantial value for domains such as autonomous driving~\cite{zheng2023autofed} and intelligent healthcare~\cite{orzikulova2024federated}.
Existing MMFL works primarily tackle optimization difficulties arising from \emph{statistical heterogeneity} and \emph{modality heterogeneity} by improving local cross-modal representation alignment~\cite{bao2023multimodal,yu2023multimodal}, designing personalized aggregation and adaptive optimization to accommodate client-specific modality availability and distributions~\cite{chen2024fedmbridge,yang2024cross,gao2025multimodal,Pokharel_2025_CVPR}, or jointly optimizing alignment and aggregation for more coherent multimodal collaboration~\cite{qi2024adaptive,Phung_2025_ICCV}. However, most existing methods implicitly assume that \emph{different modalities follow a relatively balanced optimization process, enabling consistent improvements under joint training}.

\begin{figure}[t]
\centering
\includegraphics[width=0.46\textwidth]{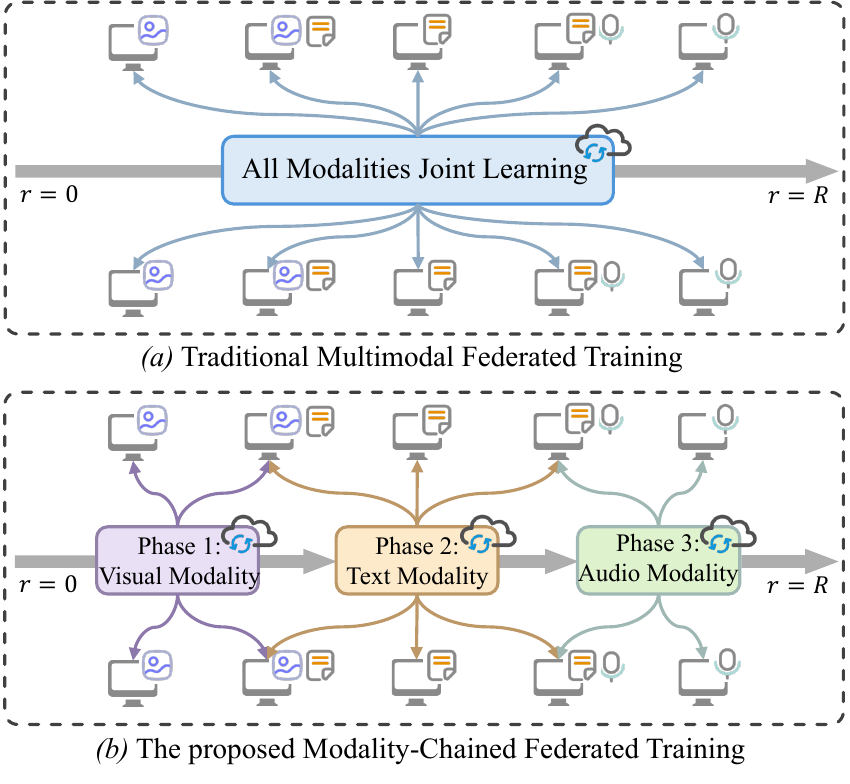}
\caption{Comparison of the training pipelines of conventional MMFL and our proposed method. Here, $r$ indexes the server communication rounds.}
\label{fig:intro}
\end{figure}

The above assumption is often violated in practice by \emph{\textbf{modality competition}}~\cite{huang2022modalitycompetitionmakesjoint,peng2022balancedmultimodallearningonthefly,du2023unimodalfeaturelearningsupervised}, a phenomenon empirically observed in centralized multimodal learning. 
During joint multimodal training, this competition biases optimization toward dominant modalities, resulting in insufficient learning of weaker ones.
Through extensive experiments, we observe that on multimodal clients in MMFL, weaker modalities often suffer from loss stagnation, while the same modalities converge normally on unimodal clients, indicating a modality dominance effect.
More importantly, the dominant modality varies across clients, driven by differences in data distribution, data quality, and model availability.
%
%
At the \emph{global} level, modality competition can further undermine aggregation stability.
%
Given that the modality competition outcome differs across clients, local models of the same modality may exhibit markedly different convergence progress. Compared with the commonly discussed conflicting update directions in federated learning (FL), this convergence-progress gap poses a more severe challenge to effective cross-client knowledge integration.
BMSFed~\cite{fan2024overcomemodalbiasmultimodal} is the closest work to our motivation.
It mitigates modality competition via modality selection, aggregating only selected client–modality branches in each round. However, this may discard potentially useful information from unselected branches.

To bridge the above gap, we propose \textsc{FedMChain}, a unified framework for balanced multimodal collaboration in heterogeneous MMFL.
As illustrated in Figure~\ref{fig:intro}, unlike conventional MMFL, which jointly optimizes all modalities in each communication round, we propose \textbf{Modality-Chained Federated Training (MCFT)}, a new MMFL training paradigm. 
MCFT structures global multimodal optimization in a chain-like manner, alternately training modalities by optimizing one modality-specific local model at a time.
Accordingly, MCFT splits MMFL into modality-wise phases, optimizing each modality in a unimodal-like regime with an exclusive optimization window to mitigate client-specific local \emph{\textbf{modality competition}} without increasing overall training time.
To promote cross-modal complementarity across phases, we design a local error-compensated regularizer that up-weights samples misclassified by preceding modalities.
Moreover, to improve the stability of intra-modality aggregation under heterogeneity, we propose \textbf{Sparse Sign-guided Consensus Aggregation (SSCA)}. 
SSCA sparsifies client updates and clusters clients by leveraging directional sign agreement, making clustering less sensitive to large cross-client differences in update magnitude. It then fuses information across clusters only on coordinates with consistent directional consensus, avoiding destructive averaging under direction conflicts. 
As a side benefit, SSCA remains stable under longer synchronization intervals, enabling a lower communication frequency. In summary, our contributions are threefold:
\begin{itemize}
    \item We propose \textsc{FedMChain}, a novel MMFL framework that performs sequential modality optimization in heterogeneous federated settings, enabling more sufficient modality-specific learning and alleviating modality competition.
    \item We introduce a sparse sign-guided aggregation strategy that leverages directional sign agreement for robust intra-modality integration, with reduced communication overhead.
    \item Extensive experiments on three widely used multimodal benchmarks demonstrate the superior effectiveness and efficiency of the proposed method.
\end{itemize}

\section{Related Works}
\label{sec:related}

\subsection{Multimodal Federated Learning}
Based on whether clients share a consistent modality configuration, MMFL can be categorized as consistent and inconsistent MMFL~\cite{che2023multimodal}. We focus on the inconsistent setting, where modality availability varies across clients.
Existing approaches for inconsistent MMFL mainly fall into two lines:
1) \textit{Modularity-based training and aggregation}. These approaches treat a part of the client models as shared modules for federated aggregation, thereby facilitating knowledge sharing among clients~\cite{yang2022cross,zhang2023unimodal,yuan2024fedmfsfederatedmultimodalfusion,cho2022heterogeneous,qi2024adaptive,li2024cross,chen2024fedmbridge,Phung_2025_ICCV,Pokharel_2025_CVPR}.
%
2) \textit{Representation-based aggregation}. Such approaches utilize prototypes or intermediate representations as carriers of knowledge, aggregating them on the server to guide the local training process~\cite{yu2023multimodal,zeng2024open,le2024cross,guo2024contribution,gao2025multimodal,seo2025clientsequalcollaborativemodel}.
%
However, most inconsistent MMFL methods still optimize modalities jointly at each client and thus do not explicitly address modality competition.
Notably, BMSFed~\cite{fan2024overcomemodalbiasmultimodal} targets modal bias by selecting modality-specific networks for communication and aggregating global prototypes to strengthen weaker modalities; yet its selection mechanism may underutilize complementary information and does not directly resolve inter-modality gradient conflicts during local joint optimization.

\begin{figure*}[ht]
\centering
\includegraphics[width=1\textwidth]{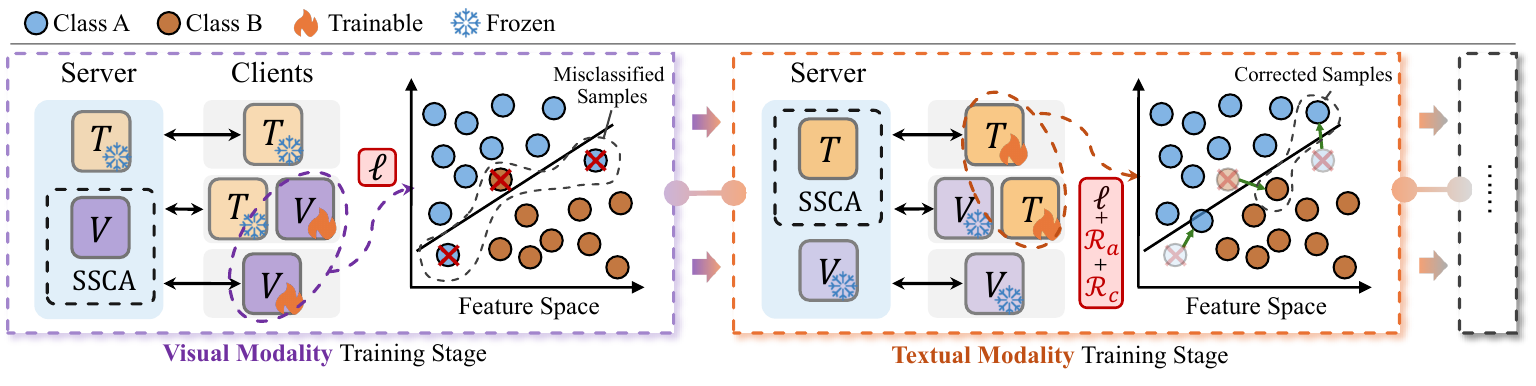}
\caption{Illustration of Modality-Chained Federated Training (MCFT), using visual ($V$) and textual ($T$) modalities as an example.}
\label{fig: training}
\end{figure*}

\subsection{Imbalanced Multimodal Learning} 
Multimodal learning often suffers from modality competition~\cite{wang2020makes,peng2022balanced}, where the dominant modality converges faster and suppresses the learning of weaker ones. 
Recent approaches address this issue from multiple angles.
Some methods~\cite{wang2020makes,peng2022balanced,fan2023pmr,li2023boosting,wei2024mmpareto} adjust the gradient magnitudes to slow down the learning of dominant modalities, thereby maintaining a more balanced optimization process across modalities.
Others~\cite{wu2022characterizing,du2023uni,zhang2023constrained} introduce additional auxiliary modules to explicitly reweight or recalibrate the contribution of each modality.
Another line of work explores improving training paradigms~\cite{fan2024detached,hua2024reconboost,jiang2024multimodal,zhang2024multimodal} or leveraging data augmentation~\cite{hwang2025midasmisalignmentbaseddataaugmentation,ma2025improving} to achieve more coordinated multimodal representation learning.
However, directly applying these methods to MMFL is challenging: gradient modulation and auxiliary recalibration add computation overhead and rely on sufficiently rich local data, while alternating training can markedly prolong local training, hurting system efficiency in communication-limited federated settings. Consequently, their gains can be limited and less stable under resource constraints and heterogeneous data.

\section{Method}
\label{sec: method}



In this section, we propose the \textsc{FedMChain} framework, comprising Modality-Chained Federated Training (MCFT) and Sparse Sign-guided Consensus Aggregation (SSCA), as shown in Figure~\ref{fig: training} and Figure~\ref{fig: merging}, respectively. Further details on \textsc{FedMChain} will be discussed in the following.

\subsection{Problem Formulation}
We consider a MMFL framework consisting of a central server and a set of clients $\mathcal{C} = \{1, 2, \dots, |\mathcal{C}|\}$.
Each client \( i \in \mathcal{C} \) holds a local dataset $\mathcal{D}_i = \big\{ (\{x_{i,j}^{(m)}\}_{m \in \mathcal{M}_i}, y_{i,j}) \big\}_{j=1}^{n_i}$,
where $n_i = |\mathcal{D}_i|$ is the number of local samples and $N = \sum_{i=1}^{|\mathcal{C}|} n_i$ is the total number of samples.
Here, $\mathcal{M}_i \subseteq \mathcal{M}$ denotes the subset of modalities available to client $i$, and $\mathcal{M}$ is the global modality set.
Each $x_{i,j}^{(m)}$ is the input of modality $m$ for sample $j$ on client $i$, and $y_{i,j}$ is the corresponding label.
Clients differ in their modality configurations: some are unimodal ($|\mathcal{M}_i| = 1$), while others are multimodal ($|\mathcal{M}_i| > 1$).
Each client $i$ maintains modality-specific encoders and classifiers with parameters $\Theta_i = \{ (\theta_{E,i}^{(m)}, \theta_{C,i}^{(m)}) \mid m \in \mathcal{M}_i \}$.
Given input \( x_{i,j}^{(m)} \), the modality prediction score is $z_{i,j}^{(m)} = f_{C,i}^{(m)}( f_{E,i}^{(m)}(x_{i,j}^{(m)}))$.
Multimodal clients may fuse predictions from multiple modalities using a fusion operator $\mathcal{F}_i(\cdot)$, 
while unimodal clients rely on a single branch.
The central server aggregates local updates to learn $K$ sets of global parameters $\{\Theta_G^k\}_{k=1}^K$, where $\Theta_G^k = \{ (\theta_{E,G}^{(m,k)}, \theta_{C,E}^{(m,k)}) \}_{m \in \mathcal{M}}$, aiming to achieve robustness under both data heterogeneity and modality heterogeneity.
At each communication round, each client $i$ is associated with exactly one global model,
represented by an assignment indicator $a_{i,k} \in \{0,1\}$ satisfying
$\sum_{k=1}^K a_{i,k} = 1$.
The global optimization objective is defined as:
\begin{equation}
\small
F_i(\Theta_G^k;\mathcal D_i)\triangleq
\frac{1}{n_i}\sum_{j=1}^{n_i}
\ell\!\left(
\mathcal{F}_i\!\left(\{z_{i,j}^{(m,k)}\}_{m\in \mathcal{M}_i}\right),
y_{i,j}
\right),
\end{equation}
\begin{equation}
\small
\min_{\{\Theta_G^k\}_{k=1}^{K}}
\sum_{i=1}^{|\mathcal{C}|} w_i
\sum_{k=1}^{K} a_{i,k}\,
F_i(\Theta_G^k;\mathcal D_i),
\end{equation}
where $w_i=\frac{n_i}{N}$, $F_i(\cdot;\cdot)$ denotes the empirical risk of client $i$, and $\ell(\cdot, \cdot)$ denotes the per-sample local loss function.

\subsection{Modality-Chained Federated Training (MCFT)}

%
Prior work~\cite{wang2020makes,peng2022balanced,fan2023pmr,du2023uni} attributes modality competition to gradient conflict across modalities under joint training, i.e., simultaneously optimizing multiple modality-specific branches under a shared objective.
Let $z_{i,j}^{\mathrm{f}} = \mathcal{F}_i(\{z_{i,j}^{(m)}\}_{m \in \mathcal{M}_i})$ denote the fused prediction scores for the $j$-th sample at client $i$.
Following the analysis proposed by Hua et al. \cite{hua2024reconboost}, the learning behavior can be characterized through a gradient alignment analysis: a modality achieves effective optimization and becomes dominant when the gradient of its modality-specific prediction is directionally consistent with that of the fused prediction,
\begin{equation}
    \Big\langle \nabla_{z_{i,j}^{(m)}} \ell,\;
    \nabla_{z_{i,j}^{\mathrm{f}}} \ell \Big\rangle > 0,
\end{equation}
In the opposite case, a persistent misalignment ($<0$) causes the modality to converge to suboptimal solutions, 
hindering its learning progress. 

\begin{figure*}[ht]
\centering
\includegraphics[width=1\textwidth]{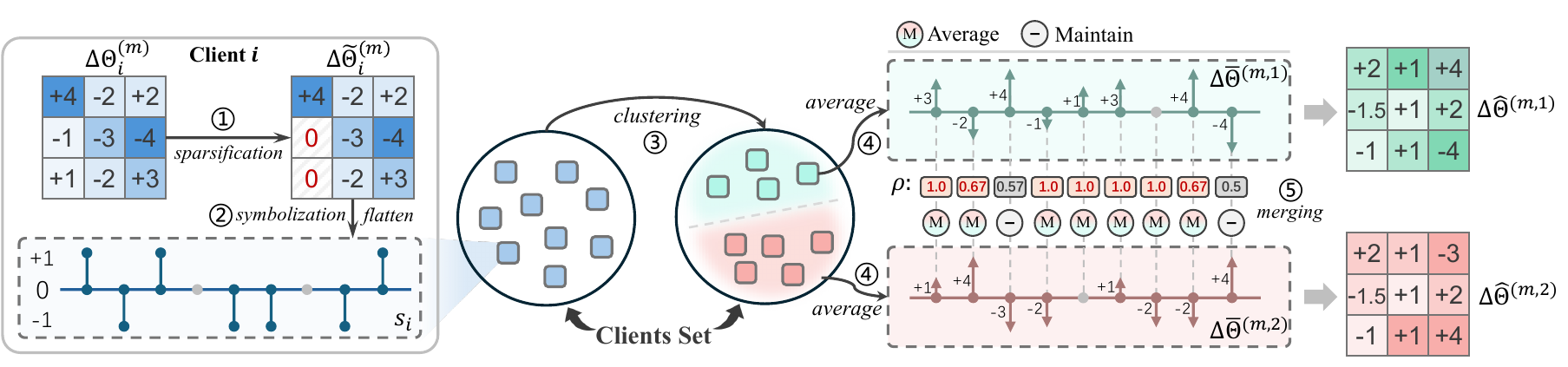}
\caption{Illustration of Sparse Sign-guided Consensus Aggregation (SSCA). We use $K=2$ as an example.}
\label{fig: merging}
\end{figure*}
In MMFL, modality competition not only undermines client-side optimization, but also carries over to server aggregation, resulting in modality-biased global updates and unstable cross-client collaboration.
To address this issue, we propose Modality-Chained Federated Training (MCFT), a \emph{modality-by-modality} optimization paradigm that updates one modality branch at a time, thereby reducing inter-modality gradient interference.
Concretely, MCFT decomposes global training into a sequence of modality-specific phases.
In phase $m$, only parameters associated with modality $m$ are optimized:
\begin{enumerate}[label=\alph*), leftmargin=*]
    \setlength{\itemsep}{5pt}
    \setlength{\parsep}{0pt}
    \setlength{\parskip}{0pt}
    \item \textbf{Unimodal clients} with modality $m$ update their modality-$m$ branch locally;
    \item \textbf{Multimodal clients} update only modality-$m$ parameters while freezing all other modality branches;
    \item The \textbf{server} periodically aggregates updates restricted to modality $m$ from participating clients to maintain modality-consistent global optimization.
\end{enumerate}
Nevertheless, phase-wise modality-specific optimization alone is insufficient to fully leverage multimodal collaboration. 
Beyond avoiding gradient interference, effective MMFL requires (i) \emph{cross-modal alignment} to induce consistent semantics across modalities and (ii) \emph{cross-modal complementarity} to ensure that distinct modalities contribute non-redundant information. 
To this end, we augment the client objective in each phase with explicit regularizers targeting both properties.
Formally, during the phase with active modality $m$, client $i$ minimizes the following mini-batch objective:
\begin{equation}
\small
\begin{aligned}
\mathcal{L}^{(m)}_i
=
&\mathbb{E}_{\mathcal{B}_i \sim \mathcal{D}_i}
\Bigg[
\frac{1}{|\mathcal{B}_i|}\sum_{(x_{i,b},y_{i,b})\in\mathcal{B}_i}
\underbrace{\ell\!\left(z_{i,b}^{(m)},\,y_{i,b}\right)}_{\text{Task}} \\
&+
\mathbb{I}\!\left[\,|\mathcal{P}_m\cap\mathcal{M}_i|>0\,\right]
\Big(
\underbrace{\lambda_a\,\mathcal{R}_{a,i}^{(m)}(\mathcal{B}_i)}_{\text{Alignment}}
+
\underbrace{\lambda_c\,\mathcal{R}_{c,i,b}^{(m)}}_{\text{Complementarity}}
\Big)
\Bigg],
\end{aligned}
\label{eq:4}
\end{equation}
where $\mathcal{P}_m$ denotes the set of modalities preceding $m$ in the chain, $\mathcal{R}_{a}$ and $\mathcal{R}_{c}$ are regularizers for cross-modal alignment and complementarity, respectively;
$\mathbb{I}[\cdot]$ enforces that these terms are applied only when the required preceding modalities are available at client.
\noindent\textbf{(i) Cross-modal Alignment Regularizer.}
We encourage the active modality $m$ to learn modality-invariant semantics by aligning its representation with those of modalities in $\mathcal{P}_m\cap\mathcal{M}_i$.
Let $h_{i,b}^{(m)}=f_{E,i}^{(m)}(x_{i,b}^{(m)})$ denote the latent feature of modality $m$ for the $b$-th sample in a mini-batch $\mathcal{B}_i$ of size $B=|\mathcal{B}_i|$.
We $\ell_2$-normalize features as $\tilde h=h/\|h\|_2$ and define an alignment regularizer that pulls together cross-modal representations of the same sample while pushing apart mismatched pairs within the mini-batch:
\begin{equation}
\small
\begin{aligned}
\mathcal{R}_{a,i}^{(m)}(\mathcal{B}_i)
= &
\frac{1}{|\mathcal{P}_m\cap\mathcal{M}_i|}
\sum_{m'\in\mathcal{P}_m\cap\mathcal{M}_i} \\
&\left[
-\frac{1}{B}\sum_{b=1}^{B}
\log
\frac{\exp\!\big(\mathcal{S}(\tilde h_{i,b}^{(m)},\tilde h_{i,b}^{(m')})/\tau\big)}
{\sum_{b'=1}^{B}\exp\!\big(\mathcal{S}(\tilde h_{i,b}^{(m)},\tilde h_{i,b'}^{(m')})/\tau\big)}
\right],
\end{aligned}
\label{eq:align_infonce}
\end{equation}
where $\mathcal{S}(\cdot,\cdot)$ denotes cosine similarity, $\tau$ is a temperature coefficient, and $b,b'$ index samples within the mini-batch.

\noindent\textbf{(ii) Cross-modal Complementarity Regularizer.}
To promote cross-modal knowledge complementarity, we introduce an error-compensation regularizer that increases the loss contribution of samples that are predicted with low confidence by modalities trained in preceding phases.
For client $i$ at the phase of modality $m$, we first aggregate the logits from preceding modalities available locally:
\begin{equation}
\bar{p}_{i,j}^{(<m)}(y_{i,j}\!\mid x_{i,j})
=
\mathrm{softmax}(\sum_{m'} z_{i,j}^{(m')})_{y_{i,j}},
\label{eq:prev_agg_comp}
\end{equation}
where $m'\in\mathcal{P}_m\cap\mathcal{M}_i$ and $\bar{p}_{i,j}^{(<m)}(y_{i,j}\mid x_{i,j})$ denotes the aggregated confidence on the ground-truth label.
We then define an error-compensation weight from the preceding-phase aggregated confidence:
\begin{equation}
e_{i,j}^{(m)} \;=\; 1-\bar{p}_{i,j}^{(<m)}\!\left(y_{i,j}\mid x_{i,j}\right),
\label{eq:weight_comp}
\end{equation}
and construct the complementarity regularizer by modulating the supervised loss of the active modality $m$:
\begin{equation}
\mathcal{R}_{c,i,j}^{(m)} \;=\; e_{i,j}^{(m)}\,\ell\!\left(z_{i,j}^{(m)},\,y_{i,j}\right).
\label{eq:8}
\end{equation}
\subsection{Sparse Sign-guided Consensus Aggregation (SSCA)}

Although MCFT mitigates modality competition by decoupling modality updates across stages, within-modality aggregation remains challenging. Under statistically non-IID client data, updates for the same modality can be highly inconsistent across clients. 
To address this issue, we propose Sparse Sign-guided Consensus Aggregation (SSCA), which leverages directional agreement to selectively consolidate intra-modality updates, reducing destructive averaging when client updates are inconsistent under statistical heterogeneity.

For a given modality $m$, each participating client $i\in\mathcal{C}$ computes a modality-specific client vector
$\Delta\Theta_i^{(m)}=\Theta_i^{(m)}-\Theta_G^{(m,k)}$, 
i.e., the directional offset between its local modality branch and the corresponding server model. 
We then apply a sparsification operator $\mathcal{T}(\Delta\Theta,\kappa)$ with retention ratio $\kappa$ to keep the most influential coordinates and zero out the rest, yielding a sparse signed update whose sign reflects the dominant optimization direction:
\begin{equation}
    s_i = \operatorname{sign}\!\big(\mathcal{T}(\Delta\Theta_i^{(m)}, \kappa)\big).
\end{equation}
The resulting sign vectors $\{s_i\}_{i\in\mathcal{C}}$ provide compact directional descriptors of client updates. 
To group clients with coherent update orientations, we build a clustering feature by flattening and concatenating all tensors in $s_i$ into a single vector, and perform unsupervised clustering in this sign feature space, yielding $K$ directional groups $\{\mathcal{C}_1,\dots,\mathcal{C}_K\}$. 
Sign-based clustering focuses on consistent update directions while attenuating magnitude heterogeneity, thereby reducing destructive averaging in subsequent within-group aggregation.

Within each cluster $\mathcal{C}_k$, we aggregate the member updates to form a cluster-level consensus update that captures their common optimization direction. 
Given the sparsified modality update $\Delta\tilde{\Theta}_i^{(m)}=\mathcal{T}(\Delta\Theta_i^{(m)},\kappa)$, we compute the cluster consensus by a sample-weighted mean:
\begin{equation}
\Delta\bar{\Theta}^{(m,k)}
=
\frac{\sum_{i \in \mathcal{C}_k} w_i\,\Delta\tilde{\Theta}_i^{(m)}}
     {\sum_{i \in \mathcal{C}_k} w_i}.
\label{eq:10}
\end{equation} 
We then coordinate the $K$ cluster consensuses through a parameter consistency score $\rho$, which measures cross-cluster directional agreement at each parameter coordinate and determines whether to unify the corresponding updates or preserve specific differences.
Concretely, for each coordinate $q$, let $\Delta\bar{\Theta}_{q}^{(m,k)}$ denote the $q$-th entry of $\Delta\bar{\Theta}^{(m,k)}$.
We define $\rho_q$ as the dominance ratio of the stronger direction:
\begin{equation}
\small
p_q=\sum_{k:\Delta\bar{\Theta}_{q}^{(m,k)}>0}\big|\Delta\bar{\Theta}_{q}^{(m,k)}\big|,\ \ 
n_q=\sum_{k:\Delta\bar{\Theta}_{q}^{(m,k)}<0}\big|\Delta\bar{\Theta}_{q}^{(m,k)}\big|,
\label{eq:11}
\end{equation}
\begin{equation}
\small
\rho_q=\frac{\max(p_q,n_q)}{p_q+n_q+\varepsilon}.
\label{eq:consistency_score}
\end{equation}
Here, $\varepsilon$ is a small constant for numerical stability. The score $\rho_q\in[0,1]$ quantifies how strongly one update direction dominates across clusters at coordinate $q$: values close to $1$ indicate that most cluster-level consensuses agree on the same sign, while smaller values imply severe sign conflicts and thus low cross-cluster consistency.

Based on $\rho_q$, SSCA applies a thresholding rule with hyper-parameter $\pi\in(0,1)$ to decide whether to merge cluster updates at coordinate $q$.
Specifically, we construct a coordinate-wise cross-cluster mask
\begin{equation}
\small
\mu_q = \mathbb{I}\!\left(\rho_q \ge \pi\right).
\label{eq:mask}
\end{equation}

\begin{table*}[ht]
\small
\centering
\caption{Performance comparison on CREMA-D, AVE, and CMU-MOSEI. We report overall accuracy (ACC, averaged over all clients), unimodal-client accuracies (ACC$_v$/ACC$_a$/ACC$_t$, averaged over clients that only have the corresponding modality), and multimodal-client accuracy (ACC$_m$, averaged over multimodal clients). Best results are in \textbf{bold} and second best are \underline{underlined}.}
\setlength{\tabcolsep}{5pt}
\begin{tabular}{l|cccc|cccc|ccccc}
\toprule
\multicolumn{1}{c|}{\multirow{2}{*}{Method}} & \multicolumn{4}{c|}{CREMA-D}     & \multicolumn{4}{c|}{AVE}         & \multicolumn{5}{c}{CMU-MOSEI}               \\ \cmidrule{2-14} 
\multicolumn{1}{c|}{}                 & ACC   & ACC$_v$ & ACC$_a$ & ACC$_m$ & ACC   & ACC$_v$ & ACC$_a$ & ACC$_m$ & ACC   & ACC$_v$ & ACC$_a$ & ACC$_t$ & ACC$_m$ \\ \midrule
Local                                 & 48.96 & 29.64  & 50.45  & 50.69    & 45.14 & 26.37  & 43.81  & 49.25    & 55.48 & 40.36  & 39.15  & 59.84  & 54.15  \\
FedAvg                                & 47.11 & 31.33  & 49.81  & 51.75    & 47.18 & 29.49  & 44.06  & 50.24    & 55.63 & 41.22  & 39.84  & 58.15  & 55.39  \\
FedProx                               & 48.35 & 30.74  & 50.69  & 51.97    & 48.25 & 30.26  & 44.68  & 51.42    & 56.59 & 40.03  & 40.24  & 60.11  & 55.45  \\
SCAFFOLD                              & 48.27 & 31.41  & 51.25  & 50.85    & 51.09 & 30.57  & 48.23  & 53.75    & 58.83 & 41.62  & 41.47  & 62.06  & 57.31  \\ \midrule
FedMSplit                             & 51.62 & 33.59  & 51.02  & 53.46    & 55.16 & 29.64  & 50.18  & 58.83    & 61.17 & 43.74  & 44.12  & 64.52  & 60.92  \\
CreamFL                               & 53.66 & 35.68  & 52.95  & 55.13    & 55.44 & 32.25  & 53.46  & 58.98    & -     & -      & -      & -      & -      \\
HAMFL                                 & 52.53 & 41.16  & 52.21  & 54.60    & 56.17 & 33.51  & 53.78  & 59.45    & 62.40 & 43.89  & 46.08  & 65.97  & 63.83  \\
M$^3$Fed                              & 53.94 & 40.15  & 51.52  & 55.98    & 55.28 & 33.05  & 54.12  & 59.36    & 61.53 & 45.14  & 48.26  & \textbf{68.46}  & 63.77  \\
FedMBridge                            & 54.88 & 39.82  & 51.95  & 56.75    & 57.33 & 35.69  & 56.33  & \underline{61.84}    & 61.29 & \underline{45.96}  & \textbf{50.38}  & 68.21  & \underline{64.68}  \\
BMSFed                                & \underline{56.14} & \textbf{44.25}  & \underline{53.65}  & \underline{58.69}    & \underline{57.62} & \underline{37.96}  & \textbf{58.25}  & 60.22    & \underline{63.05} & \textbf{46.24}  & 49.76  & 68.12  & 64.30  \\ \midrule
\rowcolor{orange!10}Ours              & \textbf{58.36} & \underline{43.89}  & \textbf{54.10}  & \textbf{60.68}    & \textbf{59.85} & \textbf{38.49}  & \underline{57.95}  & \textbf{62.74}    & \textbf{64.96} & 45.86  & \underline{50.25}  & \underline{68.37}  & \textbf{65.75}  \\ \bottomrule
\end{tabular}
\label{Table:big}
\end{table*}

For $\mu_q=1$, we first identify the dominant direction
\begin{equation}
\small
d_q=\operatorname{sign}(p_q-n_q),
\label{eq:14}
\end{equation}
and aggregate only the cluster entries aligned with $d_q$:
\begin{equation}
\small
\Delta\hat{\Theta}_q^{(m)}
=
\frac{\sum_{k=1}^{K}\alpha^k\,\Delta\bar{\Theta}_{q}^{(m,k)}\,\mathbb{I}\!\left(\operatorname{sign}(\Delta\bar{\Theta}_{q}^{(m,k)})=d_q\right)}
{\sum_{k=1}^{K}\alpha^k\,\mathbb{I}\!\left(\operatorname{sign}(\Delta\bar{\Theta}_{q}^{(m,k)})=d_q\right)+\varepsilon},
\label{eq:15}
\end{equation}
where $\alpha^k=\sum_{i\in\mathcal{C}_k}w_i$ is the total weight of cluster $\mathcal{C}_k$.
For $\mu_q=0$ (i.e., $\rho_q<\sigma$), we do not enforce cross-cluster merging and simply retain $\Delta\hat{\Theta}_{q}^{(m,k)}=\Delta\bar{\Theta}_{q}^{(m,k)}$.
Combining the two cases, the final aggregated update can be expressed compactly as
\begin{equation}
\small
\Delta\hat{\Theta}_{q}^{(m,k)}
=
\mu_q\,\Delta\hat{\Theta}_q^{(m)}
+
(1-\mu_q)\,\Delta\bar{\Theta}_{q}^{(m,k)}.
\label{eq:16}
\end{equation}
This rule unifies clusters only on coordinates with clear directional dominance, while keeping sign-ambiguous coordinates at the cluster-level consensuses to avoid interference from conflicting optimization directions.

Finally, the server updates the modality-specific global parameters via a controlled merge step:
\begin{equation}
\Theta_{G,r+1}^{(m,k)}=\Theta_{G,r}^{(m,k)}+\lambda_{\mathrm{merge}}\Delta\hat{\Theta}^{(m,k)},
\label{eq:lambda_merge}
\end{equation}
where $\lambda_{\mathrm{merge}}\in(0,1]$ regulates the integration strength of the aggregated update to stabilize optimization under heterogeneous client updates. 

Notably, SSCA reconciles divergent client updates by emphasizing directional consensus during aggregation, which improves robustness when client updates are misaligned under heterogeneity. Empirically, this robustness makes SSCA less sensitive to longer synchronization intervals (Table~\ref{tab:period}), allowing the server to aggregate less frequently and thus reducing communication overhead.

\subsection{Convergence Analysis}
We establish a convergence guarantee for \textsc{FedMChain} in smooth non-convex federated optimization.
For an arbitrary modality stage $m$ and an arbitrary cluster $k$, we consider the corresponding stage objective
$F^{(m,k)}$ induced by the participating clients in $\mathcal{C}_k$.
Under standard federated assumptions, Theorem~D.13 in Appendix~\ref{app:convergence} proves the canonical non-convex stationarity bound
\begin{equation}
\frac{1}{R}\sum_{r=0}^{R-1}\mathbb{E}\bigl[\|\nabla F^{(m,k)}(\Theta_r)\|^2\bigr]
\le \mathcal{O}\!\left(\frac{1}{R}\right) + \mathcal{O}(\sigma_{\mathrm{agg}}^{2}),
\end{equation}
where $R$ is the number of server aggregation rounds and $\sigma_{\mathrm{agg}}^{2}$ captures the SSCA-induced aggregation noise.
\textbf{\textit{The details of the proof are provided in Appendix~D.}}

\section{Experiments}
\label{sec: exp}

\subsection{Experimental Setup}

\noindent\textbf{Datasets.}
We evaluate our proposed method on three commonly used multimodal benchmarks: \textbf{CREMA-D}~\cite{cao2014crema}, \textbf{AVE}~\cite{tian2018audio}, and \textbf{CMU-MOSEI}~\cite{zadeh2018multimodal}.
To simulate federated multimodal settings with heterogeneous data distributions, we partition each dataset into multiple clients using a Dirichlet distribution with concentration parameter $\beta \in\{ 0.5,1.0\}$\footnote{Unless otherwise specified, all results reported in this paper are obtained with $\beta=1.0$.},where a smaller $\beta$ indicates higher heterogeneity. 
Within each client, the local data are split into 80\% for training and 20\% for testing. Additionally, we simulate heterogeneous modality availability by masking modalities for a subset of clients to create diverse cross-client modality configurations.
\textbf{\textit{Further dataset and partitioning details are included in the Appendix~\ref{app:Additional Experimental Setup-Dataset}}}.

\noindent\textbf{Implementation Details.}
All methods are implemented in PyTorch~\cite{paszke2017automatic} and trained on 8 NVIDIA RTX 3090 GPUs.
For a fair comparison, all baselines are evaluated under the same data partition protocol and identical backbone/model configurations.
The local batch size is set to 128, and each communication round runs one local epoch.
We train for 500 communication rounds and perform server aggregation every 20 rounds.
For our method, we set the sparsity ratio $\kappa$ to 0.7, the consensus threshold $\pi$ to 0.9, the merge learning rate $\lambda_\text{merge}$ to 0.9, and the loss weighting coefficients $\lambda_{a}$ and $\lambda_{c}$ to 0.4 and 1.0, respectively.
For all datasets, we repeat each experiment three times with different random seeds, and report the mean performance.
\textbf{\textit{Further experimental setup details are in Appendix~\ref{app:Additional Experimental Setup-Train}.}}

\noindent\textbf{Baselines.}
We compare our method with representative MMFL baselines under modality distribution heterogeneity:
\textbf{Local} (where each client is trained independently without communication), classic FL methods adapted to multimodal settings (\textbf{FedAvg}~\cite{mcmahan2017communication}, \textbf{FedProx}~\cite{li2020federated}, \textbf{SCAFFOLD}~\cite{karimireddy2020scaffold}), and recent heterogeneous MMFL approaches (\textbf{HAMFL}~\cite{qi2024adaptive}, \textbf{M$^3$Fed}~\cite{li2024cross}, \textbf{CreamFL}~\cite{yu2023multimodal}, \textbf{FedMSplit}~\cite{chen2022fedmsplit}, \textbf{FedMBridge}~\cite{chen2024fedmbridge}, \textbf{BMSFed}~\cite{fan2024overcomemodalbiasmultimodal}).
Among them, \textbf{BMSFed} is specifically designed to alleviate modality competition in MMFL scenarios, serving as a particularly relevant and competitive baseline to our approach. 

%


\subsection{Performance Analysis}

Table~\ref{Table:big} summarizes the performance of our method and representative baselines on three datasets.
Overall, our approach achieves the best ACC (overall clients) and ACC$_m$ (multimodal clients) across the three datasets, suggesting that it more effectively mitigates modality competition and thus better preserves cross-modal complementarity during collaborative training. 
Although we are not the top method on every unimodal subset (e.g., ACC$_v$ on CREMA-D can be slightly lower than BMSFed), we consistently outperform classical FL baselines such as FedAvg/FedProx/SCAFFOLD on unimodal-client metrics, and often exceed Local training. Since unimodal clients cannot leverage cross-modal fusion locally, these gains primarily reflect more reliable server-side aggregation under non-IID updates, providing indirect evidence of the robustness of our aggregation strategy.
BMSFed is the most competitive baseline and can yield higher unimodal-client performance in some cases, consistent with its modality-aware coordination that explicitly targets modality competition and better protects unimodal branches from being overwhelmed by multimodal objectives.
In contrast, a group of heterogeneous-MMFL methods (e.g., FedMSplit, CreamFL, HAMFL, M$^3$Fed, and FedMBridge) achieves weaker overall results. Although these approaches introduce modularization or bridging to cope with missing modalities, they still do not sufficiently suppress inter-modality gradient conflicts, so modality competition persists during training and ultimately limits both ACC and ACC$_m$.
Finally, classical FL algorithms may even underperform Local training (e.g., FedAvg 47.11 vs. Local 48.96 on CREMA-D), indicating that naive parameter averaging can amplify conflicting updates and lead to negative transfer rather than effective knowledge sharing.

\begin{table}[t]
\caption{Ablation study of our framework on CREMA-D, AVE, and CMU-MOSEI.}
\setlength{\tabcolsep}{6pt}
\centering
\small
\begin{tabular}{c|ccc}
\toprule
Method                                      & CREMA-D & AVE   & CMU-MOSEI \\ \midrule
w/o MCFT & 
54.96\,{\textcolor{gray}{\scriptsize$\downarrow$3.40}} & 
54.82\,{\textcolor{gray}{\scriptsize$\downarrow$5.03}} & 
60.39\,{\textcolor{gray}{\scriptsize$\downarrow$4.57}} \\

w/o SSCA & 
56.25\,{\textcolor{gray}{\scriptsize$\downarrow$2.11}} & 
55.43\,{\textcolor{gray}{\scriptsize$\downarrow$4.42}} & 
60.98\,{\textcolor{gray}{\scriptsize$\downarrow$3.98}} \\ 

w/o $\mathcal{R}_a$ & 
57.12\,{\textcolor{gray}{\scriptsize$\downarrow$1.24}} & 
58.14\,{\textcolor{gray}{\scriptsize$\downarrow$1.71}} & 
64.25\,{\textcolor{gray}{\scriptsize$\downarrow$0.71}} \\

w/o $\mathcal{R}_c$ & 
57.03\,{\textcolor{gray}{\scriptsize$\downarrow$1.33}} & 
57.29\,{\textcolor{gray}{\scriptsize$\downarrow$2.56}} & 
63.84\,{\textcolor{gray}{\scriptsize$\downarrow$1.12}} \\

w/o $\mathcal{R}_a \& \mathcal{R}_c$ & 
56.31\,{\textcolor{gray}{\scriptsize$\downarrow$2.05}} & 
55.60\,{\textcolor{gray}{\scriptsize$\downarrow$4.25}} & 
63.08\,{\textcolor{gray}{\scriptsize$\downarrow$1.88}} \\ \midrule

\rowcolor{orange!10}Ours & \textbf{58.36} & \textbf{59.85} & \textbf{64.96} \\ 
\bottomrule
\end{tabular}
\label{tab:ablation}
\end{table}

\begin{figure}[t]
    \centering

    \begin{subfigure}[t]{0.236\textwidth}
        \centering
        \includegraphics[width=\linewidth]{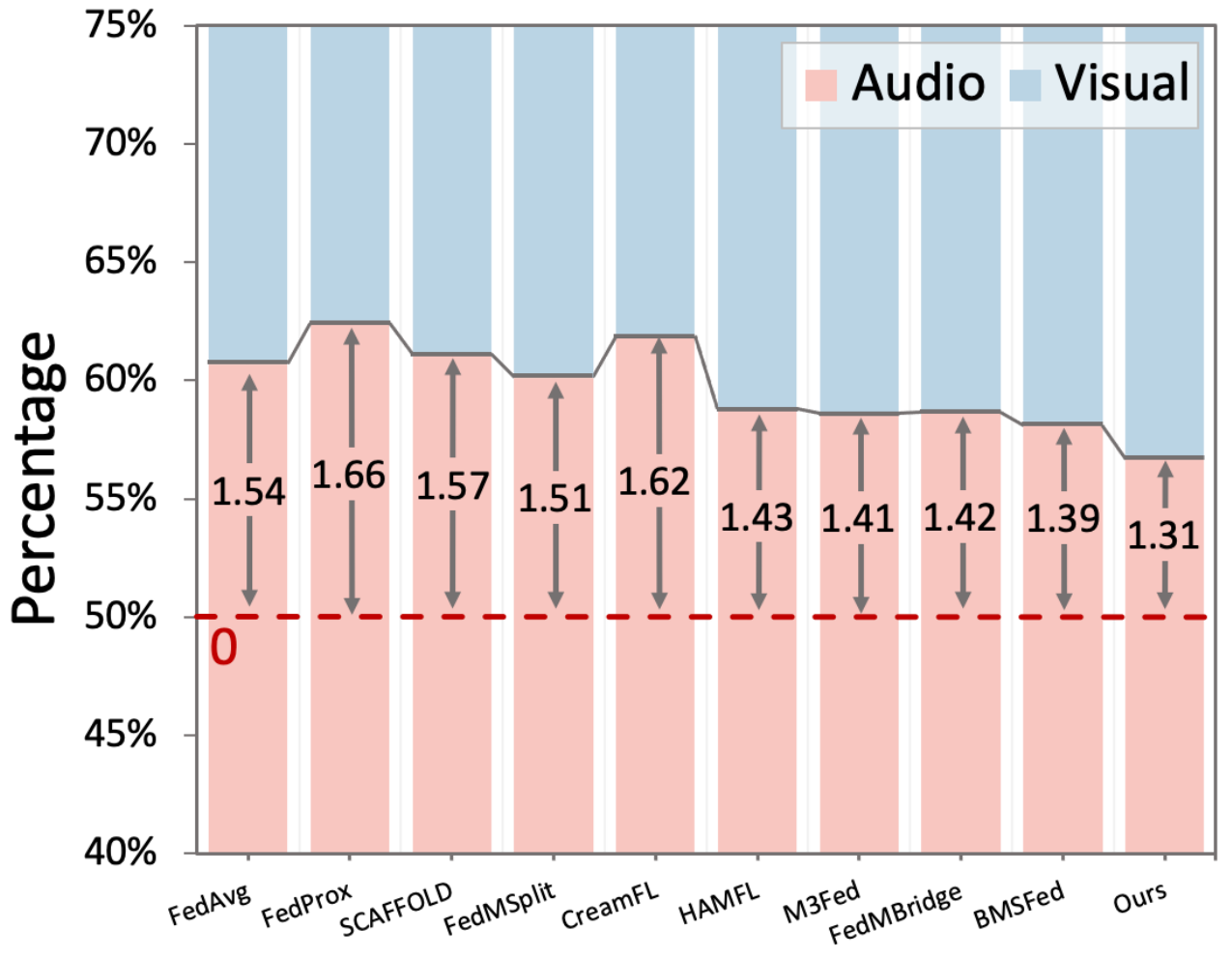}
        \caption{MIR}
        \label{fig:ex-mir}
    \end{subfigure}\hfill
    \begin{subfigure}[t]{0.23\textwidth}
        \centering
        \includegraphics[width=\linewidth]{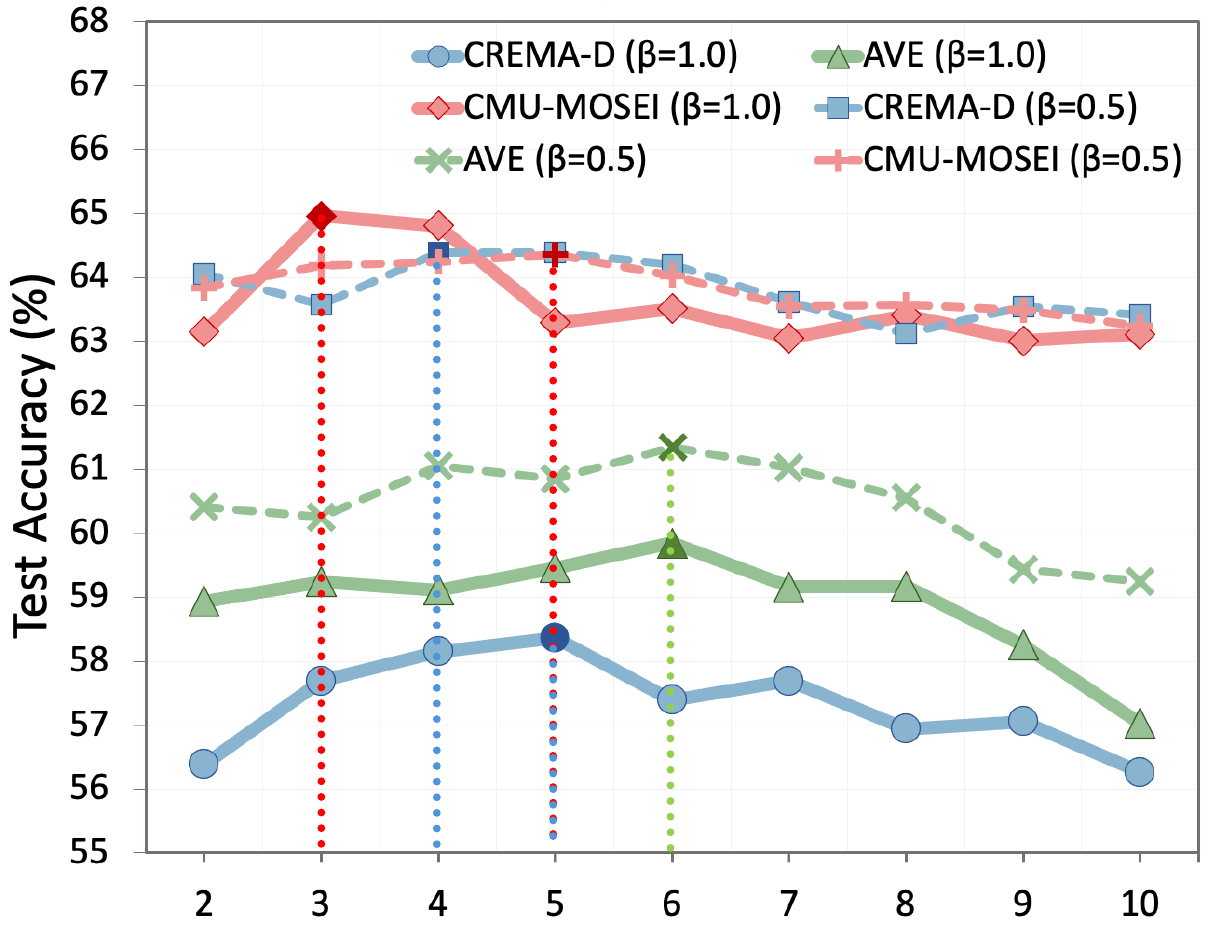}
        \caption{$K$}
        \label{fig:ex-cluster}
    \end{subfigure}

    \caption{Analysis of modality competition and clustering sensitivity. (a) Modality imbalance ratio (MIR) on CREMA-D. (b) Effect of the number of SSCA clusters $K$ on test accuracy across three datasets under different data heterogeneity levels $\beta\in\{0.5,1.0\}$.}
    \label{fig:ex-2}
\end{figure}

\begin{figure}[t]
\centering
\includegraphics[width=0.48\textwidth]{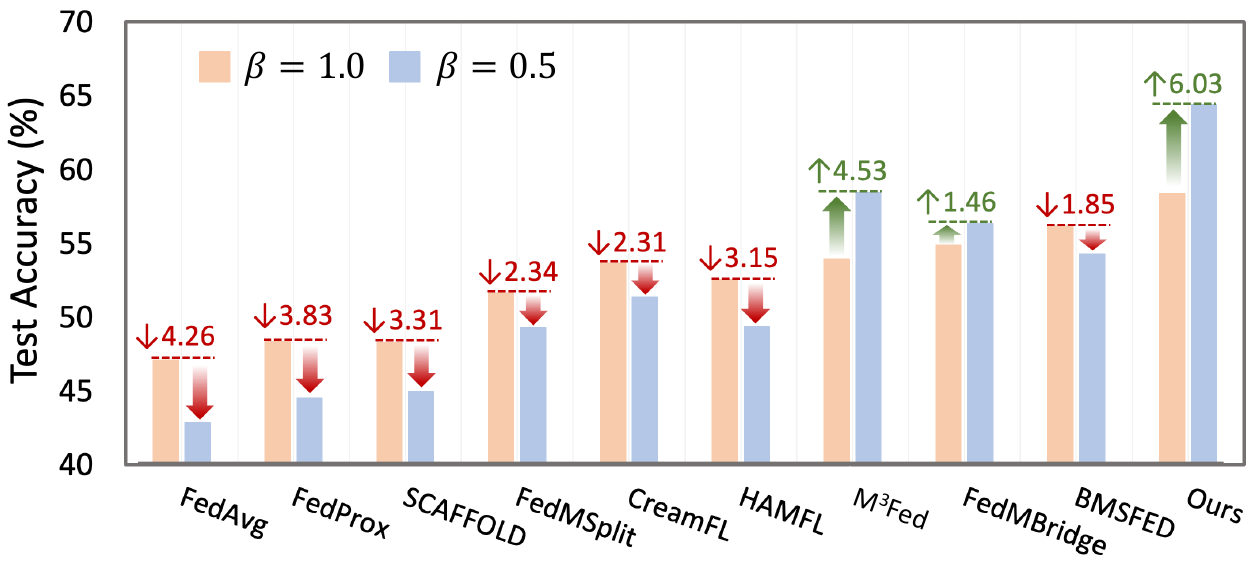}
\caption{The effect of different data heterogeneity on performance of all competitors.}
\label{fig:yizhi}
\end{figure}

\subsection{Ablation Studies}

Table~\ref{tab:ablation} isolates the effect of each component. 
Removing MCFT causes the largest performance degradation, indicating that modality-wise optimization is essential for mitigating modality competition.
Disabling SSCA also leads to a clear drop. In this setting, the server aggregation degenerates to a plain averaging scheme applied separately to each modality branch. This indicates that SSCA is necessary to stabilize collaborative optimization and to prevent naive averaging from causing unstable aggregation under heterogeneous client updates.
Removing either $\mathcal{R}_a$ or $\mathcal{R}_c$ results in a noticeable degradation, and ablating both produces an even larger drop. Specifically, $\mathcal{R}_a$ promotes cross-modal alignment, which facilitates reliable knowledge transfer among modality branches, whereas $\mathcal{R}_c$ reweights training to make each modality focus more on samples that were misclassified by preceding modalities, thereby promoting complementary evidence rather than redundant signals. Consequently, removing both terms simultaneously breaks alignment and weakens complementarity, leading to the most pronounced performance loss.

\begin{figure*}[t]
    \centering

    \begin{subfigure}[t]{0.23\textwidth}
        \centering
        \includegraphics[width=\linewidth]{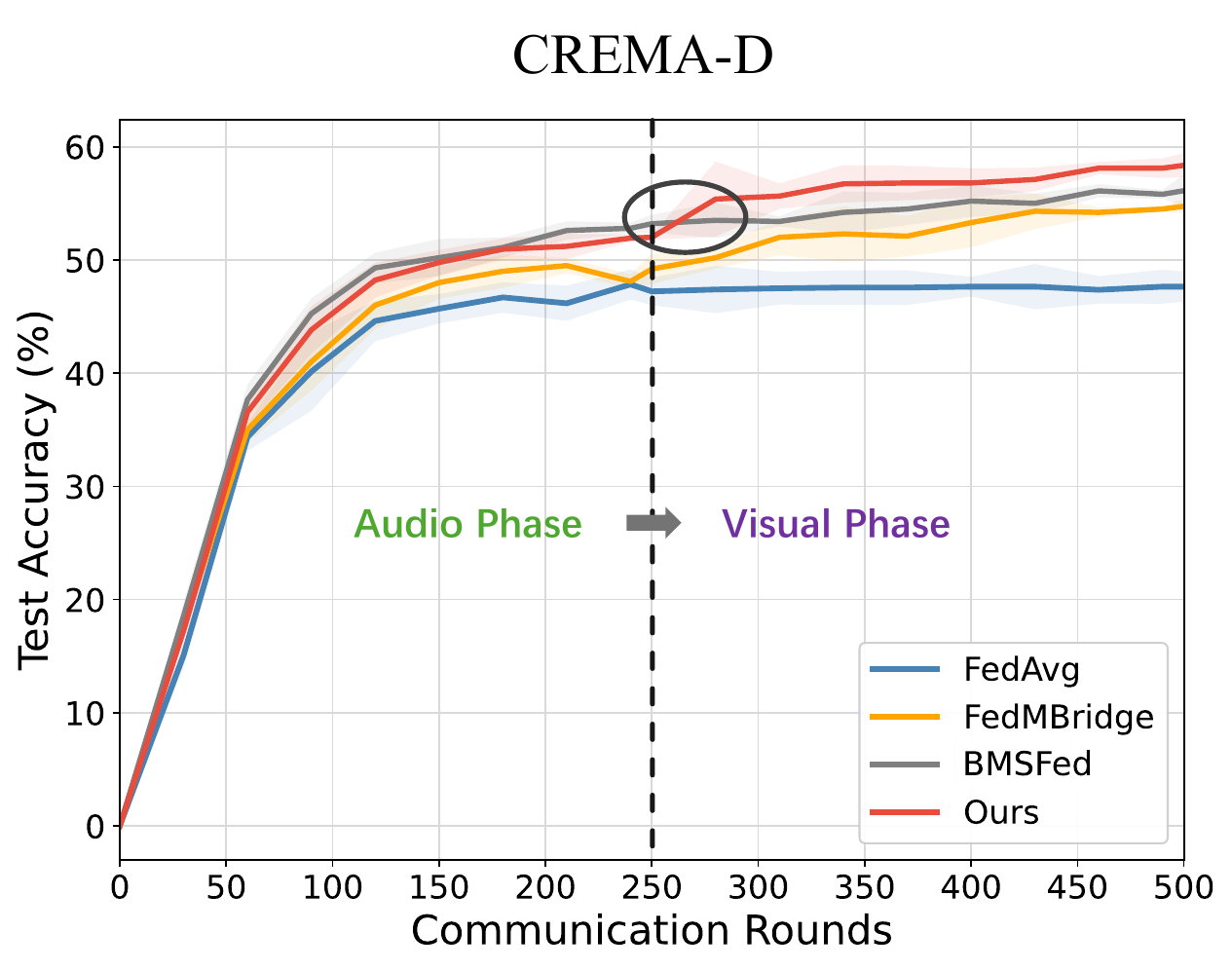}
        \caption{}
        \label{fig:round-C}
    \end{subfigure}\hfill
    \begin{subfigure}[t]{0.23\textwidth}
        \centering
        \includegraphics[width=\linewidth]{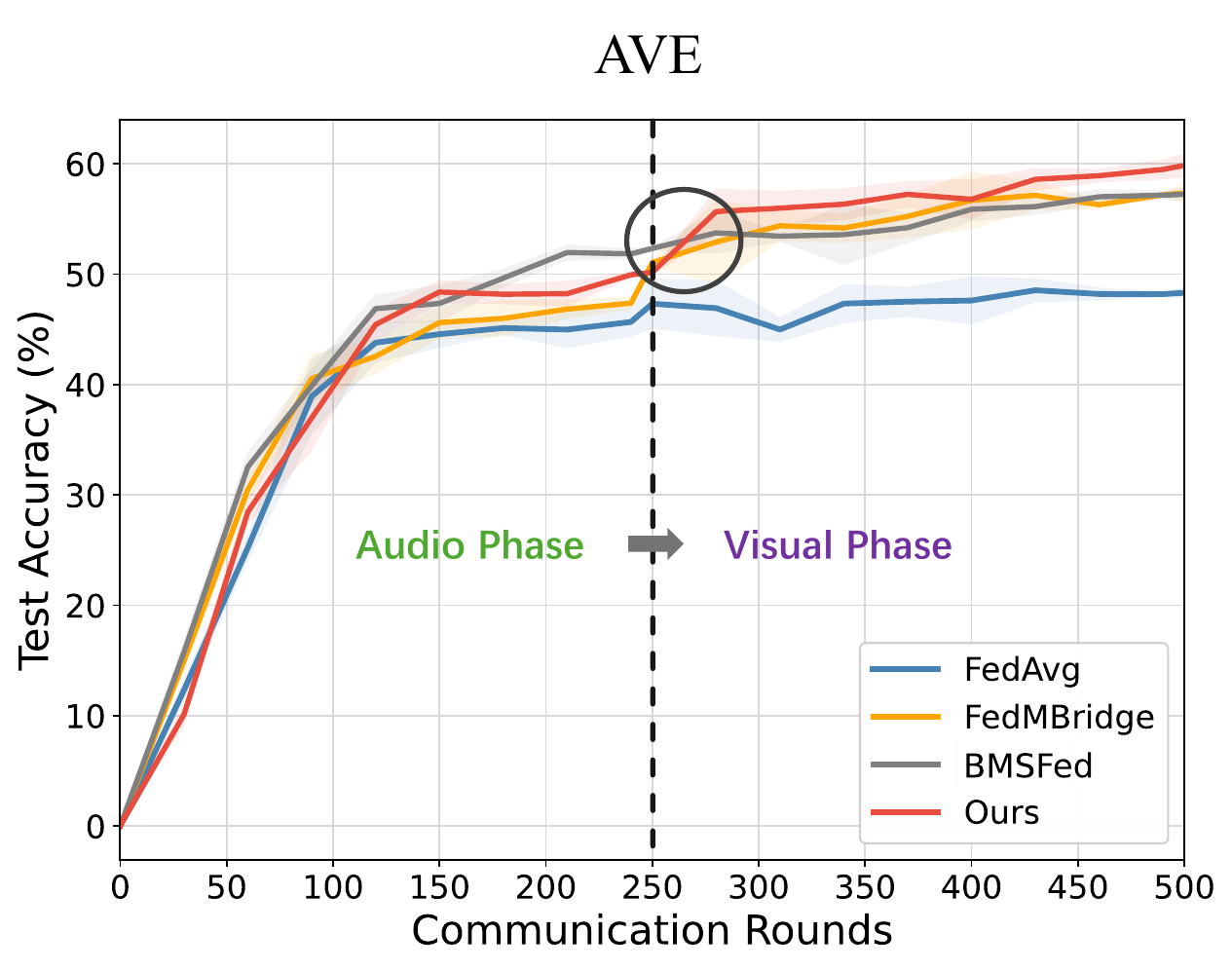}
        \caption{}
        \label{fig:round-A}
    \end{subfigure}\hfill
    \begin{subfigure}[t]{0.23\textwidth}
        \centering
        \includegraphics[width=\linewidth]{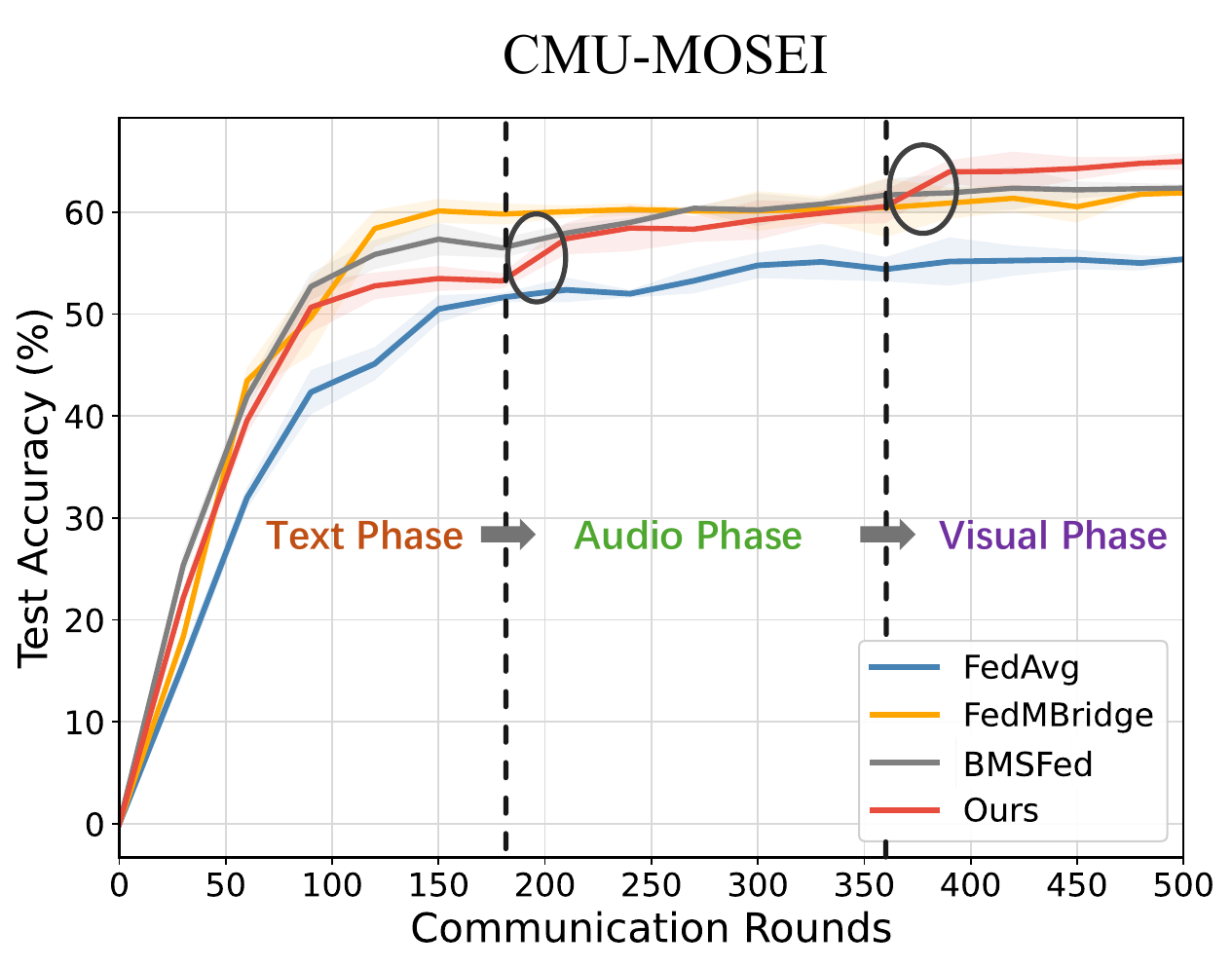}
        \caption{}
        \label{fig:round-M}
    \end{subfigure}\hfill
    \begin{subfigure}[t]{0.23\textwidth}
        \centering
        \includegraphics[width=\linewidth]{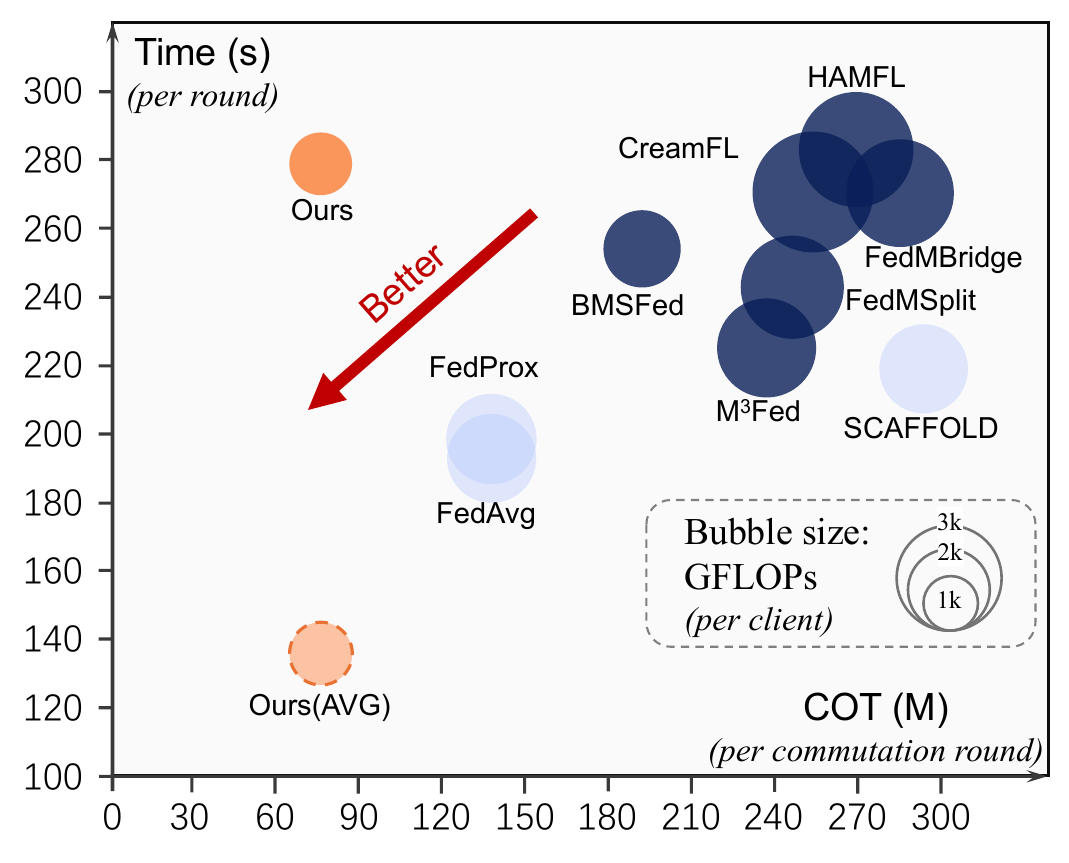}
        \caption{}
        \label{fig:eff}
    \end{subfigure}

    \caption{(a)-(c) The effect of communication rounds on the performance of the three datasets; (d) Efficiency trade-offs for all competitors on CREMA-D (x-axis: communication cost per round; y-axis: average runtime per round; bubble size: local computation per round). }
    \label{fig:round}
\end{figure*}


\subsection{Further Analysis}


\noindent\textbf{Modality competition.}
Modality competition often enlarges the performance gap between modalities.
To quantify this effect, we report the modality imbalance ratio (MIR), where a larger value indicates more severe modality imbalance (see Appendix~\ref{app:Additional Experimental Results} for the formal definition).
As shown in Figure~\ref{fig:ex-mir} on CREMA-D, our method achieves the smallest MIR among all compared approaches (1.31).
This reduced imbalance is achieved together with the best ACC/ACC$_m$ in Table~\ref{Table:big}, showing that our method improves modality balance without sacrificing the stronger modality's performance.

\noindent\textbf{Effect of the number of clusters.}
We evaluate our method with different numbers of clusters, setting $K \in \{2,3,\ldots,10\}$.
Figure~\ref{fig:ex-cluster} reports test accuracy on CREMA-D, AVE, and CMU-MOSEI under $\beta \in \{0.5,1.0\}$.
Across datasets and both $\beta$ values, performance improves as $K$ increases from $2$ to roughly $3\sim 6$, and then plateaus or mildly degrades when $K \ge 7$.
In particular, the best (or near-best) performance is typically achieved around $K=4\sim6$, suggesting that a moderate clustering granularity provides a good balance between capturing intra-class diversity and avoiding over-fragmentation.
When $K$ is too large, the clustering becomes overly fine-grained and introduces noisy assignments, which can hurt generalization.
Overall, the method is relatively insensitive within the moderate range, and we set $K=5$ as a default in all experiment.

\noindent\textbf{Degree of Heterogeneity of Data Distribution.} We investigate the performance of different methods on CREMA-D under two Dirichlet heterogeneity levels ($\beta \in \{0.5,1.0\}$). As shown in Figure~\ref{fig:yizhi}, classic FL baselines exhibit pronounced performance degradation as heterogeneity increases. In contrast, MMFL approaches such as FedMBridge improve in the more heterogeneous regime, owing to their personalized parameter aggregation or update mechanisms that mitigate cross-client negative transfer. Notably, our method achieves the best results under both settings and further improves under stronger heterogeneity (+6.03\%). 
We attribute this improvement to SSCA, it groups clients by directional sign agreement and aggregates within groups, which becomes increasingly beneficial as non-IID intensifies.

\noindent\textbf{Number of Communication Rounds.} Figure~\ref{fig:round} shows the average test performance across all clients over different communication rounds. Due to the chained modality training paradigm, our method exhibits a distinct phase-wise performance improvement pattern, with noticeable gains at each modality transition. Through successive modality updates, these phase-wise improvements accumulate, enabling our method to outperform all baselines in later communication rounds.
Notably, this advantage is achieved under the same number of communication rounds as the baselines, indicating higher performance for the same overall training duration.

\noindent\textbf{Sensitivity of the Aggregation Period.} Table~\ref{tab:period} compares different aggregation periods $\upsilon \in \{1,5,10,15,20,30\}$, where the server aggregates once every $\upsilon$ rounds.
Performance improves as $\upsilon$ increases from $1$ to $20$, and then becomes flat or slightly degrades for larger $\upsilon$.
This trend can be explained by the reliance of SSCA on directional sign consistency.
SSCA clusters clients using the sparsified signed updates and enables cross-cluster merging only on coordinates that satisfy the consensus threshold.
As $\upsilon$ increases, local optimization yields cleaner and more consistent update directions, making the sign-based grouping more reliable, which in turn improves aggregation quality.
Beyond a certain point, however, longer local trajectories under non-IID data amplify drift and reduce cross-client directional agreement, so fewer coordinates exceed the consensus threshold and the benefit of SSCA saturates or mildly declines.

\noindent\textbf{Efficiency Analysis.} We illustrate the efficiency comparison of different methods in Figure~\ref{fig:eff}.
%
Our method attains the lowest per-client local computation (smallest bubbles) and the lowest per-round communication cost (leftmost position) among all compared methods.
This is because MCFT updates only one modality branch per stage while freezing the others, reducing both backpropagation and the transmitted payload.
Regarding the per-round time, the dominant cost of our method comes from the global aggregation stage.
With SSCA enabling stable aggregation at a lower frequency, we additionally report \textbf{Ours(AVG)}, which measures the average runtime over all communication rounds (including both aggregation and non-aggregation rounds). \textbf{Ours(AVG)} achieves the lowest average time, indicating that our method has the shortest overall training time among all competitors.


\textit{\textbf{Due to space limitations, additional experimental results are provided in Appendix~\ref{app:Additional Experimental Results}.}}

\begin{table}[t]
\caption{The effect of aggregation period on the performance of the three datasets.}
\centering
\small
\setlength{\tabcolsep}{4pt}
\begin{tabular}{l|cccccc}
\toprule
\multicolumn{1}{c|}{\textit{period} ($\upsilon$)}  & 1 & 5 & 10 & 15 & 20 & 30 \\ \midrule
CREMA-D                              & 57.69 & 57.93 & 58.08 & 58.21  & \textbf{58.36}  & 58.33    \\
AVE                                  & 58.76 & 59.38 & 59.71 & 59.64  & \textbf{59.85}  & 59.42    \\
CMU-MOSEI                            & 64.02 & 64.34 & 64.19 & 64.22  & \textbf{64.96}  & 64.83    \\ \bottomrule
\end{tabular}
\label{tab:period}
\end{table}

\section{Conclusion \& Limitation}


In this work, we identify modality competition as a key bottleneck in MMFL. We propose \textsc{FedMChain}, which structures training into modality-wise phases to balance modality learning and employs a sparse sign-guided aggregation for conflict-aware, communication-efficient updates. Extensive experiments demonstrate the effectiveness of our method.
A limitation is that the modality training order is currently chosen empirically; developing more adaptive and principled ordering strategies is left for future work.

\section*{Impact Statement}

This paper presents work whose goal is to advance the field of Machine
Learning. There are many potential societal consequences of our work, none
which we feel must be specifically highlighted here.

\nocite{langley00}

\bibliography{example_paper}
\bibliographystyle{icml2026}

\newpage
\appendix
\onecolumn

\section{Appendix Overview}
Given the space limitations in the main paper, we defer additional materials to the appendix. Appendix~\ref{app:Additional Experimental Setup} provides further implementation and experimental setup details, Appendix~\ref{app:Additional Experimental Results} reports additional experimental results, Appendix~\ref{app:convergence} contains the convergence analysis and corresponding proofs, and Appendix~\ref{app:code} summarizes the proposed method with pseudocode.

\section{Additional Experimental Setup}\label{app:Additional Experimental Setup}

\subsection{Datasets}\label{app:Additional Experimental Setup-Dataset}
\paragraph{CREMA-D} CREMA-D is an audio-visual dataset for speech emotion recognition. It contains 7,442 original clips (2–3 seconds) from 91 actors speaking short utterances, annotated with six emotion classes (anger, disgust, fear, happy, neutral, and sad). Labels are obtained via crowd-sourcing from 2,443 raters.

\paragraph{AVE} The Audio-Visual Event (AVE) dataset is a benchmark for audio-visual event localization, consisting of 4,143 YouTube videos from 28 event categories. Each video is temporally annotated with audio-visual event boundaries and contains at least one event segment of 2 seconds. The events span diverse domains including human/animal activities, musical performances, and vehicle-related sounds.

\paragraph{CMU-MOSEI} CMU-MOSEI is a large-scale multimodal sentiment/emotion dataset, comprising 23,453 annotated utterances extracted from over 5,000 videos, with more than 1,000 distinct speakers and roughly 250 topics, offering substantial diversity in both content and speaker characteristics. 
Human annotators label each sample with a sentiment score from -3 (strongly negative) to +3 (strongly positive). We view this as a three classification problem, with the categories being negative, neutral, and positive.

\begin{table*}[ht]
\caption{Statistics of the datasets used in our experiments.}
\centering
\small
\setlength{\tabcolsep}{1pt}
\begin{tabular}{lcccccc}
\toprule
\multirow{2}{*}{Dataset} &
\multirow{2}{*}{\#sample} &
\multirow{2}{*}{\#class} &
\multirow{2}{*}{\#client}&
\multicolumn{3}{c}{Modality}  \\
\cmidrule(lr){5-7}
 &  &  &  & Visual & Audio & Text \\
\midrule
CREMA-D & 7,442 & 6 & 30 & \ding{51} & \ding{51} & \ding{55}  \\
AVE & 3,741 & 28 & 30 & \ding{51} & \ding{51} & \ding{55}  \\
CMU-MOSEI & 22,346 & 3 & 70 & \ding{51} & \ding{51} & \ding{51}  \\
\bottomrule
\end{tabular}
\label{tab:datasets}
\end{table*}


As shown in Figure~\ref{fig:client_label_dist_2}-Figure~\ref{fig:client_label_dist_1}, we present heatmaps of the client-wise label distributions induced by our Dirichlet-based partitioning under two heterogeneity levels ($\beta\in\{0.5,1.0\}$). For each dataset, we construct a client--class matrix and normalize each client’s label counts into proportions; the heatmap visualizes these proportions with color intensity, where columns denote clients and rows denote classes. In particular, a smaller concentration parameter ($\beta=0.5$) leads to more skewed and client-specific label compositions, whereas a larger value ($\beta=1.0$) yields comparatively more balanced label distributions across clients. As shown in Figure~\ref{fig: client_modality}, we further illustrate the client-wise modality availability resulting from our modality partitioning strategy. Specifically, we construct a binary client--modality matrix in which each entry indicates whether a modality is available at a given client; the heatmap uses color to distinguish available versus missing modalities, with columns denoting clients and rows denoting modalities. Such modality-level non-IID settings reflect realistic multimodal federated scenarios, where acquisition or system constraints lead to systematic modality missingness across participants.

\begin{figure*}[h]
\centering
\includegraphics[width=0.9\textwidth]{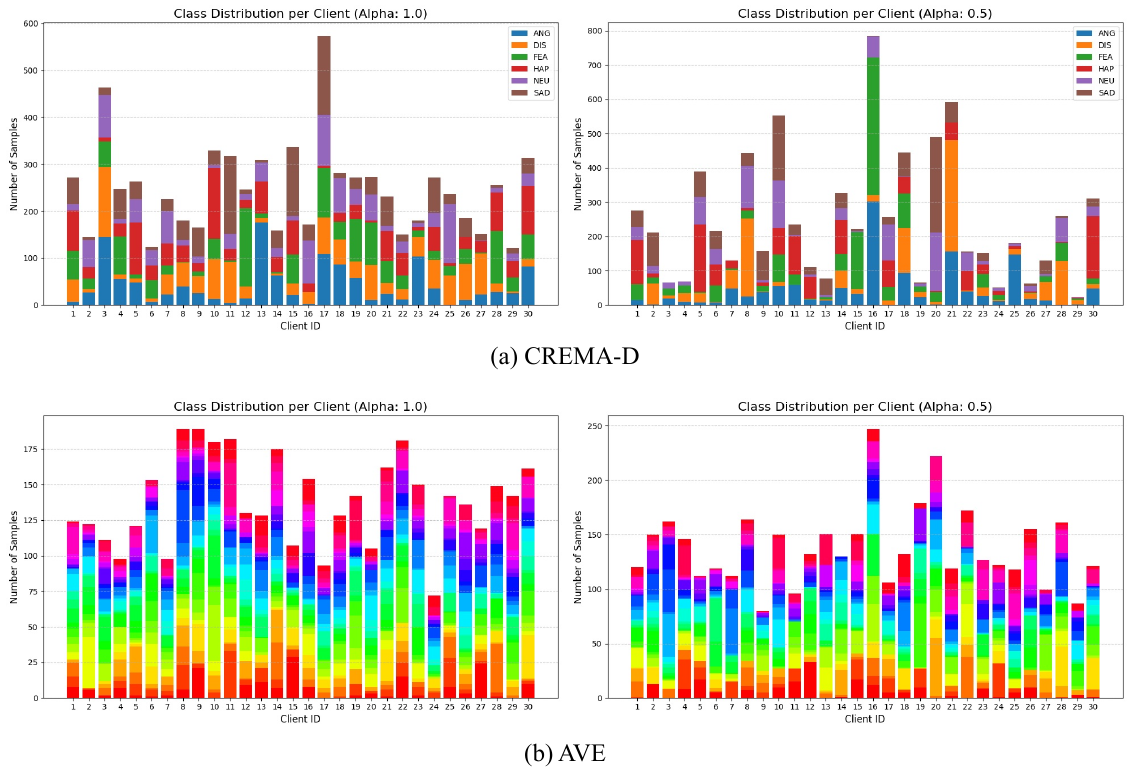}
\caption{Client-wise label distributions on CREMA-D and AVE dataset under different heterogeneity levels ($\beta \in \{0.5,1.0\}$).}
\label{fig:client_label_dist_2}
\end{figure*}

\begin{figure*}[h]
\centering
\includegraphics[width=0.9\textwidth]{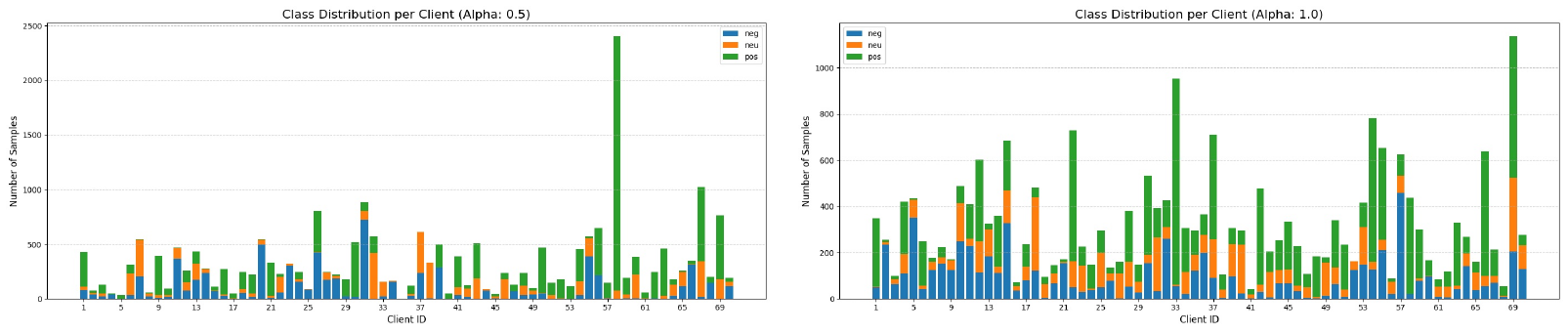}
\caption{Client-wise label distributions on CMU-MOSEI dataset under different heterogeneity levels ($\beta \in \{0.5,1.0\}$).}
\label{fig:client_label_dist_1}
\end{figure*}

\begin{figure*}[h]
\centering
\includegraphics[width=1\textwidth]{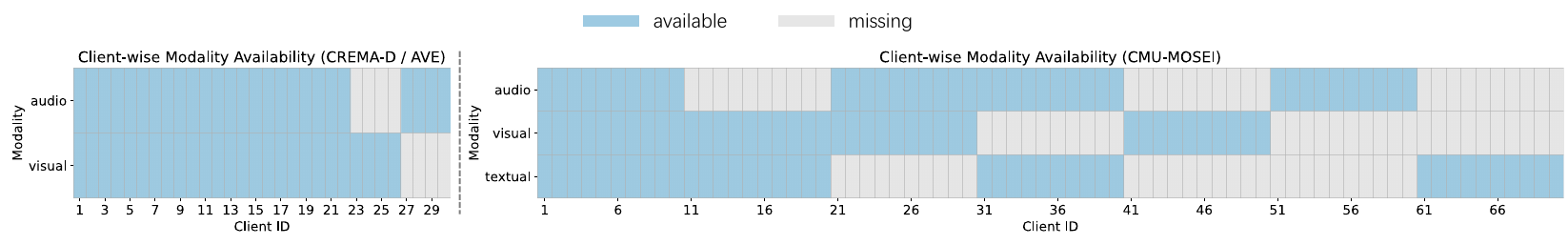}
\caption{Client-wise modality availability on CREMA-D/AVE and CMU-MOSEI.}
\label{fig: client_modality}
\end{figure*}

\subsection{Training Details}\label{app:Additional Experimental Setup-Train}
With respect to the CREMA-D and AVE datasets, we adopt a ResNet-18 backbone with modality-specific input stems (3-channel for vision and 1-channel for audio). For both datasets, we use one video frame of size $224 \times 224 \times 3$, extracted as the middle frame of each clip (with random resized cropping and horizontal flipping during training). For the audio stream, we load the waveform at $22{,}050$ Hz and ensure a fixed duration of $3$ seconds by truncation or repetition. We then compute a log-magnitude STFT spectrogram using $\text{n}\_\text{fft}=512$ and $\text{hop}\_\text{length}=353$, resulting in $257$ frequency bins (i.e., $1 + \text{n}\_\text{fft}/2$). The visual and audio backbones output $512$-dimensional global features via adaptive pooling, which are fed into a fully connected layer to produce modality-specific predictions. Finally, the audio and visual predictions are fused to obtain the final score.

For the CMU-MOSEI dataset, we use pre-extracted language, visual, and acoustic features. The text modality is represented by BERT embeddings \cite{devlin2019bertpretrainingdeepbidirectional} with a 768-dimensional feature size, while the visual and audio modalities use FACET features (35 dimensions) and COVAREP features \cite{6853739} (74 dimensions), respectively. These features are passed through modality-specific encoders to produce 128-dimensional latent representations. In particular, AudioNet and VisualNet adopt a three-layer MLP backbone, where each layer is followed by ReLU, dropout, and layer normalization, and then a linear layer is used to output modality-specific predictions. For the text modality, TextNet uses a bidirectional LSTM to encode the input sequence; the final hidden state is projected to a 128-dimensional vector and fed into a linear prediction head. Finally, we combine the predictions from different modalities to obtain the final output.

For all three datasets, we use the Adam optimizer with dataset-specific learning rates of $1\times 10^{-5}$, $1\times 10^{-4}$, and $1\times 10^{-3}$, respectively. In each communication round, we randomly sample a fraction of clients with sampling ratio $0.7$ to participate in training. We run federated training for 500 rounds in total, and allocate the same number of training rounds to each modality. For the SSCA, we perform direction-consistency clustering using the standard KMeans algorithm.

\newpage
\section{Additional Experimental Results}\label{app:Additional Experimental Results}

\begin{wraptable}{r}{0.45\textwidth}
  \centering
  \caption{The effect of MCFT modality training order on the performance of the three datasets.}
  \vspace{-0.5\baselineskip} 
  \begin{tabular}{c|c|l}
\toprule[1pt]
\multirow{2}{*}{CREMA-D}   & Audio $\rightarrow$ Visual   & \textbf{58.36} \\
                           & Visual$\rightarrow$Audio   & 56.09 \\ \midrule
\multirow{2}{*}{AVE}       & Audio$\rightarrow$Visual   & \textbf{59.85} \\
                           & Visual$\rightarrow$Audio   & 56.33 \\ \midrule
\multirow{6}{*}{CMU-MOSEI} & Textual$\rightarrow$Audio$\rightarrow$Visual & \textbf{64.96} \\
                           & Text$\rightarrow$Visual$\rightarrow$Audio & 64.58 \\
                           & Audio$\rightarrow$Visual$\rightarrow$Text & 62.16 \\
                           & Audio$\rightarrow$Text$\rightarrow$Visual & 63.24 \\
                           & Visual$\rightarrow$Audio$\rightarrow$Text & 62.07 \\
                           & Visual$\rightarrow$Text$\rightarrow$Audio & 62.37 \\ \bottomrule[1pt]
\end{tabular}
\label{fig:order}
\end{wraptable}

\noindent \textbf{Impact of different modality training orders.} Table~\ref{fig:order} reports the performance of MCFT under different modality training orders on CREMA-D/AVE (two modalities) and CMU-MOSEI (three modalities). 
Empirically, on CREMA-D and AVE, the Audio → Visual schedule consistently outperforms Visual → Audio by a clear margin (58.36 vs. 56.09 on CREMA-D; 59.85 vs. 56.33 on AVE). 
On CMU-MOSEI, schedules that start with Text perform best, with Text → Audio → Visual achieving the top score (64.96), while starting from Audio/Visual leads to noticeably worse results (e.g., 62.07–63.24). We attribute this pattern to the design of MCFT: later modalities are trained with explicit cross-modal alignment and complementarity regularizers that depend on the set of preceding modalities, i.e., earlier stages effectively provide ``reference signals'' that shape subsequent optimization. Consequently, placing a more semantically reliable modality earlier (Audio in CREMA-D/AVE, Text in MOSEI) yields stronger guidance for later stages, whereas starting from a weaker/less stable modality can propagate suboptimal biases to downstream stages. 

\noindent \textbf{Sensitivity of hyperparameters ($\kappa, \pi$) in SSCA.}
Table~\ref{fig:kappapi} evaluates the hyperparameter sensitivity of SSCA by varying the retention ratio $\kappa$ and the consensus threshold $\pi$, two factors that directly control how much client update information is preserved and how conservatively cross-cluster updates are reconciled. 
Specifically, $\kappa$ retains the top-$\kappa$ fraction of coordinates (thus sparsifying $1-\kappa$), while $\pi$ gates coordinate-wise cross-cluster merging according to the directional dominance criterion. 

Regarding $\kappa$, performance improves markedly as $\kappa$ increases from $0.1$ to the range $0.7$--$0.8$, but then saturates or slightly degrades when $\kappa$ becomes very large (e.g. $0.9$). This behavior suggests a trade-off inherent to SSCA: overly small $\kappa$ corresponds to aggressive sparsification that discards many informative yet non-dominant coordinates, yielding an under-expressive global update and reducing the reliability of sign-based clustering. Conversely, overly large $\kappa$ retains almost the full update and therefore re-admits low-magnitude, high-variance coordinates that are most sensitive to client heterogeneity, which blurs the dominant directional structure exploited by SSCA and increases the chance that conflicting local drifts adversely affect the aggregated update.

For $\pi$, moderate-to-high values consistently outperform small thresholds, with the best results attained around $\pi \approx 0.8$. When $\pi$ is too small, SSCA merges coordinates even when directional dominance is weak, effectively over-smoothing across clusters and amplifying cross-cluster interference. When $\pi$ is too large (e.g., $0.9$), the criterion becomes overly conservative, suppressing merging even for coordinates with substantial but not overwhelming agreement, and thus leaving shared beneficial signal underutilized.

\noindent \textbf{Sensitivity of $\lambda_a$ and $\lambda_c$ on the three datasets.}
Figure~\ref{fig: lambda-4}(a--c) shows the effect of $\lambda_a$ and $\lambda_c$ on CREMA-D, AVE, and CMU-MOSEI. Overall across the three datasets, the strongest performance is achieved when $\lambda_a$ is in the range $0.3$--$0.5$ and $\lambda_c$ is in the range $0.8$--$1.2$ (with our final choice $\lambda_a=0.4$ and $\lambda_c=1.0$ lying near the optimum). This observation suggests that moderate $\lambda_a$ provides sufficient cross-modal alignment to transfer stable semantics from preceding modalities, while moderate $\lambda_c$ strengthens complementary learning by emphasizing hard/uncertain samples; excessively small values weaken both effects, whereas overly large values over-regularize the stage objective and can amplify noisy reweighting signals, reducing generalization.

\noindent \textbf{Sensitivity of $\lambda_{\mathrm{merge}}$ on the three datasets.}
Figure~\ref{fig: lambda-4}(d) reports the effect of $\lambda_{\mathrm{merge}}$ on the three datasets. Overall, performance peaks when $\lambda_{\mathrm{merge}}$ is in the range $0.8$--$0.9$ (with our final choice $\lambda_{\mathrm{merge}}=0.9$). The reason is that $\lambda_{\mathrm{merge}}$ controls how strongly the SSCA-aggregated consensus update is injected into the global model: too small values make the server update overly conservative and slow to accumulate consensus, while too large values make the update overly aggressive and more sensitive to residual cross-client inconsistency, slightly hurting stability.

\subsection{Modality Imbalance Ratio (MIR)}

We use the modality imbalance ratio (MIR) to quantify the degree of modality competition, i.e., how unevenly the model performs when restricted to different modalities.
For each modality $m$, we evaluate the model in a unimodal inference manner by enabling only modality $m$ (and disabling/masking other modalities) on all clients (including multimodal clients), and obtain the unimodal accuracy for each client.
We then average these unimodal accuracies over all clients to get a modality-wise mean accuracy $\bar{\mathrm{ACC}}^{(m)}$.
Finally, MIR is defined as the ratio between the best and the worst modality-wise mean accuracies:
\begin{equation}
\mathrm{MIR}=\frac{\max_{m}\ \bar{\mathrm{ACC}}^{(m)}}{\min_{m}\ \bar{\mathrm{ACC}}^{(m)}}.
\end{equation}
By construction, $\mathrm{MIR}\ge 1$, and a larger MIR indicates a larger modality-induced performance imbalance.

\begin{table*}[h]
\caption{The effect of hyperparameters $\kappa$ and $\pi$ on the performance of the three datasets.}
\centering
\begin{tabular}{l|ccccccccc}
\toprule
\multicolumn{1}{c|}{$\kappa$}  & $\kappa=0.1$  & $\kappa=0.2$  & $\kappa=0.3$ & $\kappa=0.4$  & $\kappa=0.5$  & $\kappa=0.6$  & $\kappa=0.7$  & $\kappa=0.8$  & $\kappa=0.9$  \\ \midrule
CREMA-D                 & 50.16 & 51.84 & 53.12 & 55.25 & 56.34 & 58.01 & \textbf{58.36} & 58.14 & 57.79 \\
AVE                     & 54.38 & 54.97 & 55.43 & 57.31 & 58.66 & 59.18 & \textbf{59.85} & 59.73 & 59.44 \\
CMU-MOSEI               & 60.67 & 61.68 & 62.24 & 62.53 & 63.75 & 64.27 & 64.84 & \textbf{64.96} & 64.65 \\ \midrule[0.8pt]
\multicolumn{1}{c|}{$\pi$} & $\pi=0.1$  & $\pi=0.2$  & $\pi=0.3$ & $\pi=0.4$  & $\pi=0.5$  & $\pi=0.6$  & $\pi=0.7$  & $\pi=0.8$  & $\pi=0.9$  \\ \midrule
CREMA-D                 & 56.43 & 56.29 & 56.82 & 57.02 & 57.33 & 58.39 & 58.21 & \textbf{58.36} & 58.14 \\
AVE                     & 55.92 & 56.47 & 57.55 & 57.32 & 57.86 & 58.48 & 59.24 & \textbf{59.85} & 59.45 \\
CMU-MOSEI               & 59.36 & 59.73 & 60.31 & 60.56 & 61.43 & 63.76 & 63.99 & \textbf{64.96} & 64.76 \\ \bottomrule
\end{tabular}\label{fig:kappapi}
\end{table*}

\begin{figure*}[h]
\centering
\includegraphics[width=1\textwidth]{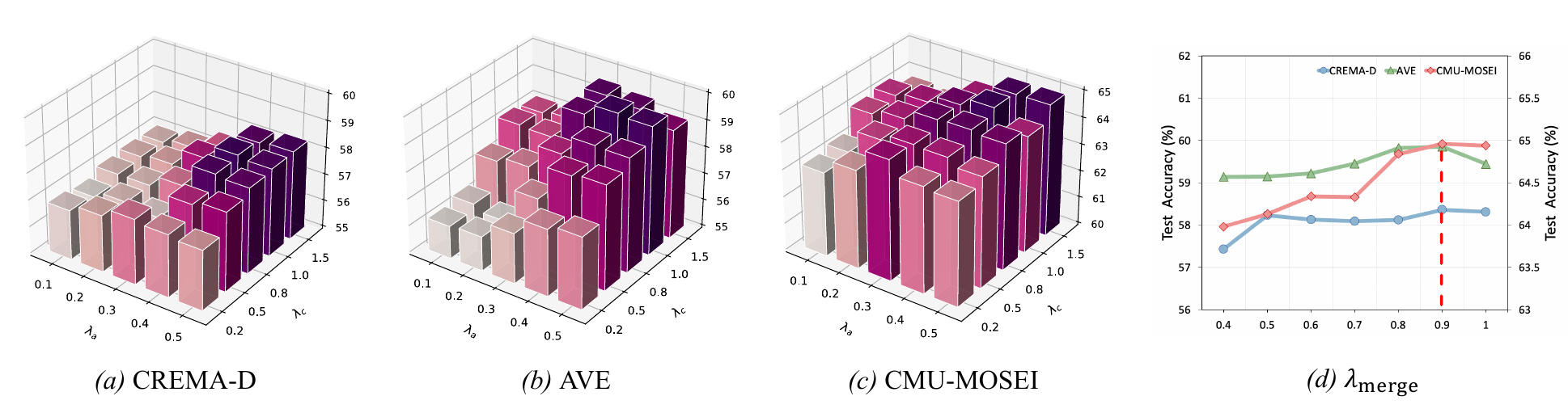}
\caption{(a)-(c): The effect of $\lambda_a$ and $\lambda_c$ on the performance of the three datasets. (d): The effect of $\lambda_{\text{merge}}$ on the performance of the three datasets.}
\label{fig: lambda-4}
\end{figure*}


\newpage
\section{Convergence Analysis}
\label{app:convergence}

\subsection{Problem Setup}
\paragraph{Modality-stage view.}
MCFT proceeds in modality-specific stages by activating one modality at a time.
In a fixed stage $m$, only the parameter block associated with the active modality is updated, while all other modality branches are frozen (see main text).
Accordingly, the analysis below focuses on the stage-$m$ parameter block, denoted by $\Theta \in \mathbb{R}^{d_m}$.

\paragraph{Cluster-specific objective.}
SSCA maintains $K$ cluster-specific global models whose parameter blocks evolve independently.
Fix a cluster index $k \in \{1,\ldots,K\}$ and let $\mathcal{C}_k$ denote the set of clients assigned to the cluster model $k$ in the current stage.
We define the stage-$m$, cluster-$k$ global objective as
\begin{equation}
F^{(m,k)}(\Theta) := \sum_{i\in \mathcal{C}_k} p_i \, \mathbb{E}_{\xi\sim \mathcal{D}_i}\!\left[f^{(m)}_i(\Theta;\xi)\right],
\quad p_i \ge 0,\ \sum_{i\in \mathcal{C}_k} p_i = 1,
\end{equation}
where $f^{(m)}_i$ is the stage-$m$ local loss used by client $i$ (Task + Alignment + Complementarity, as defined in the main text).
For brevity, we write $F(\Theta)$ when $(m,k)$ is clear from context.





\paragraph{Local update and SSCA aggregation.}
Let $r = 0,1,2,\ldots$ index the server aggregation rounds for the cluster model $k$ in stage $m$.
At the beginning of round $r$, the server broadcasts $\Theta_r \equiv \Theta^{(m,k)}_r$ to clients in $\mathcal{C}_k$.
Each client $i \in \mathcal{C}_k$ performs $\mathcal{E}$ local SGD steps:
\begin{equation}
\label{eq:app_local_sgd}
\Theta^{\rm{s}+1}_{i,r} = \Theta^{\rm{s}}_{i,r} - \eta\, g^{(m)}_i(\Theta^{\rm{s}}_{i,r}; \xi^{\rm{s}}_{i,r}),
\  \rm{s} = 0,1,\ldots,\mathcal{E}-1,
\  \Theta^0_{i,r} = \Theta_r,
\end{equation}
where $g^{(m)}_i(\cdot;\xi)$ is a stochastic gradient of $f^{(m)}_i$ at sample $\xi$ and $\eta > 0$ is the local stepsize.

Define the client vector:
\begin{equation}
\Delta\Theta_{i,r} := \Theta^\mathcal{E}_{i,r} - \Theta_r.
\end{equation}
SSCA takes $\{\Delta\Theta_{i,r}\}_{i\in\mathcal{C}_k}$ as input and outputs an aggregated update direction for cluster $k$:
\begin{equation}
\Delta\widehat{\Theta}_r \equiv \Delta\widehat{\Theta}^{(m,k)}_r \in \mathbb{R}^{d_m}.
\end{equation}
The server updates the cluster-$k$ model with merge strength $\lambda_{\mathrm{merge}} > 0$:
\begin{equation}
\label{eq:app_server_update}
\Theta_{r+1} = \Theta_r + \lambda_{\mathrm{merge}}\,\eta_g\Delta\widehat{\Theta}_r. 
\end{equation}

\paragraph{Notation.}
The server update in the main text uses a single coefficient $\lambda_{\mathrm{merge}}$(main).
For analysis, we rewrite it as $\lambda_{\mathrm{merge}}\text{(main)}=\eta_{\mathrm{g}}\lambda_{\mathrm{merge}}$,
where $\eta_{\mathrm{g}}$ is a global stepsize and $\lambda_{\mathrm{merge}}$ denotes the merge strength.
This is a purely notational re-parameterization and does not change the algorithm.


\subsection{Assumptions}

\begin{assumption}[Smoothness]
\label{ass:smooth}
For any modality stage $m$ and any client $i$, the local objective $f_i^{(m)}(\Theta)$ is $L$-smooth: for all $\Theta,\Theta'$,
$\|\nabla f_i^{(m)}(\Theta)-\nabla f_i^{(m)}(\Theta')\|\le L\|\Theta-\Theta'\|$.
Consequently, the corresponding cluster objective $F^{(m,k)}(\Theta)$ is also $L$-smooth.
\end{assumption}

\begin{assumption}[Unbiased stochastic gradients and bounded variance]
\label{ass:stoch}
For all $i\in\mathcal{C}_k$ and all $\Theta$, the stochastic gradient satisfies
$\mathbb{E}[g_i^{(m)}(\Theta;\xi)]=\nabla f_i^{(m)}(\Theta)$ and
$\mathbb{E}\|g_i^{(m)}(\Theta;\xi)-\nabla f_i^{(m)}(\Theta)\|^2 \le \sigma^2$.
\end{assumption}

\begin{assumption}[Client heterogeneity bound]
\label{ass:hetero}
There exists $\zeta\ge 0$ such that for all $\Theta$,
\begin{equation}
\sum_{i\in\mathcal{C}_k} p_i\, \big\|\nabla f_i^{(m)}(\Theta)-\nabla F(\Theta)\big\|^2 \le \zeta^2.
\label{eq:app_hetero}
\end{equation}
\end{assumption}

\begin{assumption}[Expected descent correlation of SSCA update]
\label{ass:dir_align}
There exists a constant $\gamma>0$ such that for all rounds $r$,
\begin{equation}
\Big\langle \nabla F(\Theta_r),~ \mathbb{E}\big[\Delta\widehat{\Theta}_{r}\mid \Theta_r\big] \Big\rangle
~\le~
-\gamma\, \big\|\nabla F(\Theta_r)\big\|^2.
\label{eq:app_dir_align}
\end{equation}
\end{assumption}

\begin{assumption}[Bounded second moment of SSCA update]
\label{ass:second_moment}
Assume that $\mathbb{E}\!\left[\|\Delta\widehat\Theta_r\|^2 \mid \Theta_r\right]$ exists and is finite for all rounds $r$.
There exist constants $B> 0$ and $\sigma_{\mathrm{agg}}^2\ge 0$ such that for all rounds $r$,
\begin{equation}
\mathbb{E}\big[\|\Delta\widehat{\Theta}_{r}\|^2 \mid \Theta_r\big]
~\le~
B\,\|\nabla F(\Theta_r)\|^2 + \sigma_{\mathrm{agg}}^2.
\label{eq:app_second_moment}
\end{equation}
\end{assumption}

\begin{remark}[Interpretation of Assumptions~\ref{ass:dir_align}--\ref{ass:second_moment}]
Assumption~\ref{ass:dir_align} requires that the SSCA update is descent-correlated in expectation, 
i.e., $\langle \nabla F(\Theta_r), \mathbb{E}[\Delta\widehat{\Theta}_r\mid \Theta_r]\rangle < 0$.
Assumption~\ref{ass:second_moment} upper-bounds the conditional second moment of the SSCA update and permits an ``error floor'' $\sigma_{\mathrm{agg}}^2$ that captures additional noise introduced by Top-$\kappa$ sparsification, sign quantization, and coordinate-wise masking/merging.
\end{remark}

\subsection{Local descent for the stage-$m$ objective}
\begin{lemma}[One-step expected descent for local SGD]
\label{lem:local_one_step}
Suppose each $f_i^{(m)}$ is $L$-smooth and Assumption~\ref{ass:stoch} holds. Then for a local update
$\Theta^+ = \Theta - \eta\, g_i^{(m)}(\Theta;\xi)$ with $\eta\le \frac{1}{L}$,
\begin{equation}
\mathbb{E}\big[f_i^{(m)}(\Theta^+)\big]
~\le~
f_i^{(m)}(\Theta)
-\frac{\eta}{2}\big\|\nabla f_i^{(m)}(\Theta)\big\|^2
+\frac{L\eta^2}{2}\sigma^2.
\label{eq:app_local_one_step}
\end{equation}
\end{lemma}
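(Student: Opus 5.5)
The plan is to combine the standard descent lemma for $L$-smooth functions with conditional expectation over the fresh sample $\xi$, holding $\Theta$ fixed. By Assumption~\ref{ass:smooth}, $L$-smoothness of $f_i^{(m)}$ gives the quadratic upper bound
\begin{equation*}
f_i^{(m)}(\Theta^+) \le f_i^{(m)}(\Theta) + \big\langle \nabla f_i^{(m)}(\Theta), \Theta^+-\Theta\big\rangle + \frac{L}{2}\|\Theta^+-\Theta\|^2 .
\end{equation*}
Substituting $\Theta^+-\Theta = -\eta\, g_i^{(m)}(\Theta;\xi)$ yields a linear term $-\eta\langle \nabla f_i^{(m)}(\Theta), g_i^{(m)}(\Theta;\xi)\rangle$ and a quadratic term $\frac{L\eta^2}{2}\|g_i^{(m)}(\Theta;\xi)\|^2$.

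Next I take expectation over $\xi$ conditional on $\Theta$. By unbiasedness in Assumption~\ref{ass:stoch}, the linear term becomes $-\eta\|\nabla f_i^{(m)}(\Theta)\|^2$. For the quadratic term I use the bias--variance decomposition
\begin{equation*}
\mathbb{E}\|g_i^{(m)}\|^2 = \|\nabla f_i^{(m)}(\Theta)\|^2 + \mathbb{E}\|g_i^{(m)}-\nabla f_i^{(m)}(\Theta)\|^2 \le \|\nabla f_i^{(m)}(\Theta)\|^2 + \sigma^2 .
\end{equation*}
The cross term vanishes, again by unbiasedness. Collecting terms gives
\begin{equation*}
\mathbb{E}\big[f_i^{(m)}(\Theta^+)\big] \le f_i^{(m)}(\Theta) - \eta\Big(1-\frac{L\eta}{2}\Big)\|\nabla f_i^{(m)}(\Theta)\|^2 + \frac{L\eta^2}{2}\sigma^2 .
\end{equation*}

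Finally, the stepsize condition $\eta\le 1/L$ implies $1-\frac{L\eta}{2}\ge \frac12$, so the coefficient of $\|\nabla f_i^{(m)}(\Theta)\|^2$ is at most $-\frac{\eta}{2}$. This gives \eqref{eq:app_local_one_step}.

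There is no real obstacle. The only point needing care is that the expectation is conditional on $\Theta$, or equivalently that $\Theta$ is deterministic. This is what lets unbiasedness cancel the cross term and lets the variance bound apply at the point $\Theta$.
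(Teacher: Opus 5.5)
Your proposal is correct and follows the paper's proof essentially step for step. Both use the smoothness quadratic upper bound, substitute $\Theta^+-\Theta=-\eta\, g_i^{(m)}(\Theta;\xi)$, apply unbiasedness together with $\mathbb{E}\|g_i^{(m)}\|^2\le\|\nabla f_i^{(m)}(\Theta)\|^2+\sigma^2$, and conclude with $1-\frac{L\eta}{2}\ge\frac{1}{2}$ from $\eta\le 1/L$. Your explicit remark that the expectation is taken with $\Theta$ held fixed (conditionally on $\Theta$) makes precise the convention the paper leaves implicit.
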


\begin{proof}
By $L$-smoothness of $f_i^{(m)}$,
$f(\Theta^+)\le f(\Theta)+\langle \nabla f(\Theta),\Theta^+-\Theta\rangle+\frac{L}{2}\|\Theta^+-\Theta\|^2$.
Substitute $\Theta^+-\Theta=-\eta g$ and take expectation.
Use unbiasedness and $\mathbb{E}\|g\|^2\le \|\nabla f(\Theta)\|^2+\sigma^2$.
Finally, $0 < \eta\le 1/L$ implies $1-\frac{L\eta}{2}\ge \frac{1}{2}$.
\end{proof}

\begin{lemma}[$\mathcal{E}$-step local progress bound]
\label{lem:local_tau_step}
Under the conditions of Lemma~\ref{lem:local_one_step}, for any client $i$ and any aggregation round $r$,
\begin{equation}
\sum_{s=0}^{\mathcal{E}-1}\mathbb{E}\big\|\nabla f_i^{(m)}(\Theta_{i,r}^{s})\big\|^2
~\le~
\frac{2}{\eta}\Big(\mathbb{E} f_i^{(m)}(\Theta_{r})-\mathbb{E} f_i^{(m)}(\Theta_{i,r}^{\mathcal{E}})\Big)
+ L\eta\mathcal{E}\sigma^2.
\label{eq:app_local_tau_step}
\end{equation}
\end{lemma}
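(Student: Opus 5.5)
The bound follows by applying Lemma~\ref{lem:local_one_step} along client $i$'s local SGD trajectory in round $r$ and telescoping. For each local step $s\in\{0,\dots,\mathcal{E}-1\}$, the update $\Theta^{s+1}_{i,r}=\Theta^{s}_{i,r}-\eta\,g_i^{(m)}(\Theta^{s}_{i,r};\xi^{s}_{i,r})$ has exactly the form treated in Lemma~\ref{lem:local_one_step}, with $\Theta=\Theta^{s}_{i,r}$. Let $\mathcal{F}^{s}_{i,r}$ be the $\sigma$-algebra generated by everything up to and including $\Theta^{s}_{i,r}$. Since $\xi^{s}_{i,r}$ is drawn fresh, Assumption~\ref{ass:stoch} holds conditionally on $\mathcal{F}^{s}_{i,r}$. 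Applying the lemma conditionally gives
\[
\mathbb{E}\big[f_i^{(m)}(\Theta^{s+1}_{i,r})\mid \mathcal{F}^{s}_{i,r}\big]\le f_i^{(m)}(\Theta^{s}_{i,r})-\frac{\eta}{2}\big\|\nabla f_i^{(m)}(\Theta^{s}_{i,r})\big\|^2+\frac{L\eta^2}{2}\sigma^2 .
\]
Taking total expectation via the tower property yields the unconditional one-step inequality
\[
\mathbb{E} f_i^{(m)}(\Theta^{s+1}_{i,r})\le \mathbb{E} f_i^{(m)}(\Theta^{s}_{i,r})-\frac{\eta}{2}\mathbb{E}\|\nabla f_i^{(m)}(\Theta^{s}_{i,r})\|^2+\frac{L\eta^2}{2}\sigma^2 .
\]

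Next, rearrange each of these inequalities to isolate $\frac{\eta}{2}\mathbb{E}\|\nabla f_i^{(m)}(\Theta^{s}_{i,r})\|^2$ and sum over $s=0,\dots,\mathcal{E}-1$. The function values telescope to $\mathbb{E} f_i^{(m)}(\Theta^{0}_{i,r})-\mathbb{E} f_i^{(m)}(\Theta^{\mathcal{E}}_{i,r})$, and the noise terms add up to $\mathcal{E}\cdot\frac{L\eta^2}{2}\sigma^2$. Using the initialization $\Theta^0_{i,r}=\Theta_r$ and multiplying through by $2/\eta$ gives exactly \eqref{eq:app_local_tau_step}.

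The only point needing care is the conditioning step. Lemma~\ref{lem:local_one_step} is stated for a fixed $\Theta$, but the iterates $\Theta^{s}_{i,r}$ are random, and $\Theta_r$ is itself random through earlier rounds. This is why the statement is written with $\mathbb{E} f_i^{(m)}(\Theta_r)$ rather than $f_i^{(m)}(\Theta_r)$. To apply the lemma legitimately I would need the variance bound to hold conditionally on the past, which I would justify from the independence of the sampled $\xi^{s}_{i,r}$. I would also need $f_i^{(m)}(\Theta^{s}_{i,r})$ to be integrable, which is implicit in the expectations appearing in the statement and which I would assume. The step-size condition $\eta\le 1/L$ carries over directly from Lemma~\ref{lem:local_one_step}.
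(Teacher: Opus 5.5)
Your proposal is correct and takes the same approach as the paper: apply the one-step descent lemma at each local step, take total expectations, telescope over $s=0,\dots,\mathcal{E}-1$, use $\Theta^0_{i,r}=\Theta_r$, and multiply by $2/\eta$. Your explicit conditioning on $\mathcal{F}^{s}_{i,r}$ followed by the tower property makes rigorous a step that the paper's one-line proof leaves implicit.
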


\begin{proof}
Apply Lemma~\ref{lem:local_one_step} to each $s=0,\dots,\mathcal{E}-1$ and sum. Rearrange terms.
\end{proof}

\subsection{Local drift under periodic aggregation}
Define the ``round-start ideal'' local update:
\begin{equation}
\Delta\Theta_{i,r}^{\mathrm{ideal}}
~:=~
-\eta\mathcal{E}\, \nabla f_i^{(m)}(\Theta_r),
\label{eq:app_ideal_local}
\end{equation}
and the local drift error:
\begin{equation}
e_{i,r}^{\mathrm{drift}}
~:=~
\Delta\Theta_{i,r}-\Delta\Theta_{i,r}^{\mathrm{ideal}}.
\label{eq:app_drift_def}
\end{equation}

\begin{lemma}[Local drift bound (second moment)]
\label{lem:local_drift}
Under Assumptions~\ref{ass:stoch}--\ref{ass:hetero} and the $L$-smoothness of $f_i^{(m)}$, there exist universal constants $c_1,c_2>0$
(depending only on $L$) such that for $\eta\le \frac{1}{L}$,
\begin{equation}
\sum_{i\in\mathcal{C}_k} p_i\, \mathbb{E}\big[\|e_{i,r}^{\mathrm{drift}}\|^2\big]
~\le~
c_1\,\eta^2\,\mathcal{E}(\mathcal{E}-1)\,\sigma^2
~+~
c_2\,\eta^2\,\mathcal{E}^2\,\zeta^2.
\label{eq:app_drift_bound}
\end{equation}
\end{lemma}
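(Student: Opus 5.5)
The plan is to write the drift error as a sum of per-step deviations and bound its two natural pieces separately. Unrolling the local recursion \eqref{eq:app_local_sgd} with $\Theta^0_{i,r}=\Theta_r$ gives $\Delta\Theta_{i,r}=-\eta\sum_{s=0}^{\mathcal{E}-1} g_i^{(m)}(\Theta^s_{i,r};\xi^s_{i,r})$. Subtracting \eqref{eq:app_ideal_local} yields
\begin{equation*}
e^{\mathrm{drift}}_{i,r}=-\eta\sum_{s=0}^{\mathcal{E}-1}\Big[\big(g_i^{(m)}(\Theta^s_{i,r};\xi^s_{i,r})-\nabla f_i^{(m)}(\Theta^s_{i,r})\big)+\big(\nabla f_i^{(m)}(\Theta^s_{i,r})-\nabla f_i^{(m)}(\Theta_r)\big)\Big].
\end{equation*}
I will use $\|a+b\|^2\le 2\|a\|^2+2\|b\|^2$ to split this into a noise sum $N_i$ and a gradient-shift sum $G_i$.

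\textbf{Noise part.} The summands of $N_i$ form a martingale difference sequence with respect to the local filtration. Assumption~\ref{ass:stoch} therefore gives $\mathbb{E}\|N_i\|^2\le \eta^2\mathcal{E}\sigma^2$, because the cross terms vanish. To reach the stated form $c_1\eta^2\mathcal{E}(\mathcal{E}-1)\sigma^2$, I will use $\mathcal{E}\le\mathcal{E}(\mathcal{E}-1)$ for $\mathcal{E}\ge 2$. The degenerate case $\mathcal{E}=1$ is exactly where this breaks: there the drift equals $-\eta(g-\nabla f_i)$, which has second moment up to $\eta^2\sigma^2$ and is not controlled by the right-hand side when $\zeta=0$. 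I therefore expect the proof to need the implicit standing assumption $\mathcal{E}\ge 2$ (multiple local steps between aggregations, as in the experiments). I will state this explicitly.

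\textbf{Gradient-shift part.} By $L$-smoothness and Cauchy--Schwarz, $\mathbb{E}\|G_i\|^2\le \eta^2 L^2\mathcal{E}\sum_s\mathbb{E}\|\Theta^s_{i,r}-\Theta_r\|^2$. The local deviation satisfies the recursion
\begin{equation*}
\Theta^{s+1}_{i,r}-\Theta_r=(\Theta^s_{i,r}-\Theta_r)-\eta\big[(g-\nabla f_i)(\Theta^s_{i,r})+(\nabla f_i(\Theta^s_{i,r})-\nabla f_i(\Theta_r))+\nabla f_i(\Theta_r)\big].
\end{equation*}
I will bound it with the standard $(1+\tfrac{1}{\mathcal{E}-1})$ Young split and $\eta\le 1/L$. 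This gives
\begin{equation*}
\mathbb{E}\|\Theta^s_{i,r}-\Theta_r\|^2\lesssim s\,\eta^2\sigma^2+s^2\eta^2\|\nabla f_i^{(m)}(\Theta_r)\|^2 .
\end{equation*}
I will then weight by $p_i$ and apply Assumption~\ref{ass:hetero} through $\|\nabla f_i(\Theta_r)\|^2\le 2\|\nabla f_i(\Theta_r)-\nabla F(\Theta_r)\|^2+2\|\nabla F(\Theta_r)\|^2$. This produces terms of order $\eta^2\mathcal{E}^2\zeta^2$ and $\eta^2\mathcal{E}(\mathcal{E}-1)\sigma^2$ once the extra powers of $\eta L\mathcal{E}$ are bounded. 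Because the constants may depend only on $L$ and not on $\mathcal{E}$, this bounding must rely on $\eta\le 1/L$ and on how the sums over $s$ collapse.

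\textbf{Main obstacle.} The difficulty is the residual $\|\nabla F(\Theta_r)\|^2$ contribution coming from the mean client gradient, which does not appear in the stated bound. My first attempt is to avoid expanding around $\nabla f_i(\Theta_r)$ directly. Instead I would apply Lemma~\ref{lem:local_tau_step}, bounding $\sum_s\mathbb{E}\|\nabla f_i(\Theta^s_{i,r})\|^2$ by the local function decrease plus $L\eta\mathcal{E}\sigma^2$, and then try to telescope that decrease against the heterogeneity. If this does not remove the term, the honest conclusion is that the inequality holds as stated only in the regime $\|\nabla F(\Theta_r)\|\lesssim\zeta$, or with $\zeta^2$ read as bounding $\sum_i p_i\|\nabla f_i\|^2$. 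In that case I will make this reading explicit and fold the term into $c_2\eta^2\mathcal{E}^2\zeta^2$.
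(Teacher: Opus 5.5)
Your route is the paper's own sketch: a martingale noise sum plus a smoothness-controlled gradient-shift sum, with $\Theta^s_{i,r}-\Theta_r$ handled by unrolling the local recursion. The two obstructions you flag are genuine, because the inequality as stated is false. The paper's sketch passes over both. It obtains $\eta^2\mathcal{E}\sigma^2$ for the noise and then asserts an $\mathcal{E}(\mathcal{E}-1)$ scaling. It also invokes Assumption~\ref{ass:hetero}, which only controls deviations from $\nabla F$.

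Concretely, take identical clients (so $\zeta=0$), write $e$ for $e^{\mathrm{drift}}_{i,r}$, and let $g_i^{(m)}(\Theta;\xi)=\nabla f_i^{(m)}(\Theta)+\epsilon$ with i.i.d. zero-mean $\epsilon$ and $\mathbb{E}\|\epsilon\|^2=\sigma^2$.
\begin{itemize}
\item For $\mathcal{E}=1$, $e=-\eta\epsilon$ has second moment $\eta^2\sigma^2>0$, while the right-hand side is $0$.
\item For $\mathcal{E}=2$, $f_i=\frac{L}{2}\|\Theta\|^2$ and exact gradients ($\sigma=0$), one gets $e=\eta^2L^2\Theta_r\neq 0$ for every $\eta\in(0,1/L]$ and $\Theta_r\neq 0$. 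The right-hand side is again $0$.
\end{itemize}
So your ``first attempt'' via Lemma~\ref{lem:local_tau_step} cannot succeed; it would also import function-value decrements that $\sigma,\zeta$ do not control. Go straight to a corrected statement. The cleanest fix keeps Assumption~\ref{ass:hetero} and adds a $\|\nabla F(\Theta_r)\|^2$ term, via the exact identity $\sum_i p_i\|\nabla f_i\|^2=\sum_i p_i\|\nabla f_i-\nabla F\|^2+\|\nabla F\|^2\le\zeta^2+\|\nabla F\|^2$. Downstream, this term is simply absorbed into $B$ in Assumption~\ref{ass:second_moment}. Reading $\zeta^2$ as a bound on $\sum_i p_i\|\nabla f_i\|^2$ would instead be a bounded-gradient hypothesis, strictly stronger than the paper's.

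The genuine gap in your own plan, also glossed over by the paper, is the step size. With only $\eta\le 1/L$, the Young-split recursion does not give $\mathbb{E}\|\Theta^s_{i,r}-\Theta_r\|^2\lesssim s\eta^2\sigma^2+s^2\eta^2\|\nabla f_i^{(m)}(\Theta_r)\|^2$ with $\mathcal{E}$-independent constants. The per-step factor $(1+\tfrac{1}{\mathcal{E}-1})(1+\eta L)^2$ can be as large as $4(1+\tfrac{1}{\mathcal{E}-1})$, and for nonconvex $f_i$ nothing contracts.

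For instance, take the additive noise above with $f_i=-\frac{L}{2}\|\Theta\|^2$, $\Theta_r=0$ and $\eta=1/L$. The iterates satisfy $\Theta^{s+1}_{i,r}=2\Theta^s_{i,r}-\eta\epsilon_s$, so $e=\Theta^{\mathcal{E}}_{i,r}$ and $\mathbb{E}\|e\|^2=\eta^2\sigma^2(4^{\mathcal{E}}-1)/3$. This exceeds $c_1\eta^2\mathcal{E}(\mathcal{E}-1)\sigma^2$ for any fixed $c_1$, even after adding a $\|\nabla F(\Theta_r)\|^2$ term (it vanishes here).

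You need the standard local-SGD restriction, e.g.\ $\eta\le\frac{1}{2L\mathcal{E}}$. Then the per-step factor is at most $1+\frac{C}{\mathcal{E}-1}$ and compounds to an absolute constant over $\mathcal{E}-1$ steps. Your argument then yields, for all $\mathcal{E}\ge 1$,
\[
\sum_{i\in\mathcal{C}_k}p_i\,\mathbb{E}\bigl\|e_{i,r}^{\mathrm{drift}}\bigr\|^2\le c_1\eta^2\mathcal{E}\sigma^2+c_2\,\eta^2\mathcal{E}(\mathcal{E}-1)\bigl(\zeta^2+\|\nabla F(\Theta_r)\|^2\bigr)
\]
with absolute constants $c_1,c_2$. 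For $\mathcal{E}\ge 2$ this is dominated by the paper's right-hand side plus $c_2\eta^2\mathcal{E}^2\|\nabla F(\Theta_r)\|^2$. That is the version you should state and prove.
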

We follow the standard drift analysis for local SGD and periodic model averaging~\cite{stich2019local, yu2019parallel} and adapt it to the stage-$m$ objective.

\begin{proof}[Proof sketch]
Expanding the local update gives
$\Delta\Theta_{i,r} = -\eta\sum_{s=0}^{\mathcal{E}-1} g_i^{(m)}(\Theta_{i,r}^{s};\xi_{i,r}^{s})$.
For each $s$, add and subtract $\nabla f_i^{(m)}(\Theta_{i,r}^{s})$ to obtain
\[
g_i^{(m)}(\Theta_{i,r}^{s};\xi_{i,r}^{s})-\nabla f_i^{(m)}(\Theta_r)
=
\underbrace{g_i^{(m)}(\Theta_{i,r}^{s};\xi_{i,r}^{s})-\nabla f_i^{(m)}(\Theta_{i,r}^{s})}_{\text{stochastic noise}}
+
\underbrace{\nabla f_i^{(m)}(\Theta_{i,r}^{s})-\nabla f_i^{(m)}(\Theta_r)}_{\text{drift}}.
\]
The noise term forms a martingale difference sequence and, under Assumption~\ref{ass:stoch}, contributes a mean-squared accumulation on the order of $\eta^2\mathcal{E}\sigma^2$.
For the drift term, $L$-smoothness yields
$\|\nabla f_i^{(m)}(\Theta_{i,r}^{s})-\nabla f_i^{(m)}(\Theta_r)\|\le L\|\Theta_{i,r}^{s}-\Theta_r\|$.
Moreover, $\mathbb{E}\|\Theta_{i,r}^{s}-\Theta_r\|^2$ is bounded by recursively unrolling \eqref{eq:app_local_sgd} and invoking Assumptions~\ref{ass:stoch}--\ref{ass:hetero}.
Summing over $s=0,\ldots,\mathcal{E}-1$ gives the claimed scaling
$O(\eta^2\mathcal{E}(\mathcal{E}-1)\sigma^2+\eta^2\mathcal{E}^2\zeta^2)$.
\end{proof}


\subsection{SSCA as an Inexact Descent Direction}
\paragraph{From local deltas to SSCA direction.}
SSCA maps the collection of local deltas $\{\Delta\Theta_{i,r}\}_{i\in\mathcal{C}_k}$ to $\Delta\widehat{\Theta}_{r}$.
We model SSCA as an aggregation operator that returns a descent-correlated update with a bounded conditional second moment, formalized in Assumptions~\ref{ass:dir_align} and~\ref{ass:second_moment}.

Optionally, one may decompose $\sigma_{\mathrm{agg}}^2$ into contributions from (i) Top-$\kappa$ sparsification/sign quantization and (ii) coordinate-wise dominance masking, but the analysis below only requires the combined bound in \eqref{eq:app_second_moment}.

\begin{assumption}[SSCA stability w.r.t.\ client deltas]
\label{ass:ssca_stability}
There exists a constant $C_{\mathrm{ssca}}\ge 1$ such that for any two collections of client deltas
$\{u_i\}_{i\in\mathcal{C}_k}$ and $\{v_i\}_{i\in\mathcal{C}_k}$,
the SSCA mapping satisfies
\begin{equation}
\big\|\mathrm{SSCA}(\{u_i\}) - \mathrm{SSCA}(\{v_i\})\big\|^2
~\le~
C_{\mathrm{ssca}} \sum_{i\in\mathcal{C}_k} p_i \|u_i - v_i\|^2.
\label{eq:app_ssca_stability}
\end{equation}
\end{assumption}

\begin{remark}[On Assumption~\ref{ass:ssca_stability}]
Assumption~\ref{ass:ssca_stability} is a technical abstraction that controls how perturbations in client deltas
(e.g., those induced by multi-step local training) are amplified through the SSCA mapping.
While SSCA involves non-smooth operations such as Top-$\kappa$ selection, sign quantization, and discrete clustering,
\eqref{eq:app_ssca_stability} can be viewed as a piecewise stability condition that holds within regions where the selected support and cluster assignment are unchanged,
or as an expected stability condition with respect to the inherent randomness in the selection/clustering procedures.
\end{remark}

\paragraph{Ideal SSCA direction (no local drift).}
Using the ideal client deltas $\{\Delta\Theta_{i,r}^{\mathrm{ideal}}\}_{i\in\mathcal{C}_k}$ defined in \eqref{eq:app_ideal_local},
define the corresponding ideal SSCA direction:
\begin{equation}
\Delta\widehat{\Theta}_{r}^{\,\mathrm{ideal}}
~:=~
\mathrm{SSCA}\big(\{\Delta\Theta_{i,r}^{\mathrm{ideal}}\}_{i\in\mathcal{C}_k}\big).
\label{eq:app_ssca_ideal_output}
\end{equation}
The actual SSCA direction is $\Delta\widehat{\Theta}_{r}=\mathrm{SSCA}(\{\Delta\Theta_{i,r}\})$.

\begin{lemma}[Propagation of local drift through SSCA]
\label{lem:drift_to_ssca}
Under Assumption~\ref{ass:ssca_stability},
\begin{equation}
\mathbb{E}\big[\|\Delta\widehat{\Theta}_{r}-\Delta\widehat{\Theta}_{r}^{\,\mathrm{ideal}}\|^2\big]
~\le~
C_{\mathrm{ssca}} \sum_{i\in\mathcal{C}_k} p_i\,\mathbb{E}\big[\|e_{i,r}^{\mathrm{drift}}\|^2\big],
\label{eq:app_drift_to_ssca}
\end{equation}
where $e_{i,r}^{\mathrm{drift}}$ is defined in \eqref{eq:app_drift_def}. Consequently, by Lemma~\ref{lem:local_drift}, for $\eta\le \frac{1}{L}$,
\begin{equation}
\mathbb{E}\big[\|\Delta\widehat{\Theta}_{r}-\Delta\widehat{\Theta}_{r}^{\,\mathrm{ideal}}\|^2\big]
~\le~
C_{\mathrm{ssca}}\Big(
c_1\,\eta^2\,\mathcal{E}(\mathcal{E}-1)\,\sigma^2
+
c_2\,\eta^2\,\mathcal{E}^2\,\zeta^2
\Big).
\label{eq:app_drift_to_ssca_explicit}
\end{equation}
\end{lemma}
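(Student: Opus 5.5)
The plan is to apply Assumption~\ref{ass:ssca_stability} pointwise and then take expectations; the lemma is essentially a direct corollary of that assumption combined with Lemma~\ref{lem:local_drift}.

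First, fix a realization of all randomness in round $r$: the stochastic samples $\{\xi^{s}_{i,r}\}$ and any internal randomness of SSCA. Take the two collections $u_i:=\Delta\Theta_{i,r}$ and $v_i:=\Delta\Theta^{\mathrm{ideal}}_{i,r}$ for $i\in\mathcal{C}_k$. By the definitions, $\Delta\widehat{\Theta}_r=\mathrm{SSCA}(\{u_i\})$ and $\Delta\widehat{\Theta}^{\,\mathrm{ideal}}_r=\mathrm{SSCA}(\{v_i\})$. Moreover, $u_i-v_i=e^{\mathrm{drift}}_{i,r}$ by \eqref{eq:app_drift_def}. Assumption~\ref{ass:ssca_stability} then gives, almost surely,
\[
\|\Delta\widehat{\Theta}_{r}-\Delta\widehat{\Theta}_{r}^{\,\mathrm{ideal}}\|^2\le C_{\mathrm{ssca}}\sum_{i\in\mathcal{C}_k}p_i\|e^{\mathrm{drift}}_{i,r}\|^2 .
\]
Taking total expectation and using linearity over the finite sum (the $p_i$ are deterministic) yields \eqref{eq:app_drift_to_ssca}. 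Both sides are nonnegative, so monotonicity of expectation suffices and no integrability subtleties arise beyond finiteness, which is guaranteed by the next step.

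Second, substitute the bound of Lemma~\ref{lem:local_drift}, valid for $\eta\le 1/L$, into the right-hand side. Multiplying by $C_{\mathrm{ssca}}$ gives \eqref{eq:app_drift_to_ssca_explicit}.

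The only delicate point is that Assumption~\ref{ass:ssca_stability} must hold for the realized pair of collections, including when the Top-$\kappa$ supports or the cluster assignments differ between the actual and ideal deltas. The remark after that assumption allows two readings. If it is taken as a deterministic statement for all pairs, as written, the pointwise argument above is complete. If it is taken only in expectation over SSCA's internal randomness, I would first condition on the local-training randomness, apply the expected version to obtain the conditional inequality, and then apply the tower property. Either way, the chain of inequalities is unchanged.
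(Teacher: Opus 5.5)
Your proposal is correct and follows the paper's proof essentially verbatim. You apply Assumption~\ref{ass:ssca_stability} pointwise with $u_i=\Delta\Theta_{i,r}$ and $v_i=\Delta\Theta^{\mathrm{ideal}}_{i,r}$, take expectations, and substitute Lemma~\ref{lem:local_drift}; the paper phrases the expectation step as conditioning on $\Theta_r$ rather than taking total expectation, which makes no difference because the drift bound is uniform in $\Theta_r$.
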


\begin{proof}
Apply Assumption~\ref{ass:ssca_stability} to $u_i=\Delta\Theta_{i,r}$ and $v_i=\Delta\Theta_{i,r}^{\mathrm{ideal}}$:
\[
\|\Delta\widehat{\Theta}_{r}-\Delta\widehat{\Theta}_{r}^{\,\mathrm{ideal}}\|^2
\le
C_{\mathrm{ssca}} \sum_{i\in\mathcal{C}_k} p_i \|\Delta\Theta_{i,r}-\Delta\Theta_{i,r}^{\mathrm{ideal}}\|^2
=
C_{\mathrm{ssca}} \sum_{i\in\mathcal{C}_k} p_i \|e_{i,r}^{\mathrm{drift}}\|^2.
\]
Taking conditional expectation given $\Theta_r$ gives \eqref{eq:app_drift_to_ssca}, and \eqref{eq:app_drift_to_ssca_explicit} follows by Lemma~\ref{lem:local_drift}.
\end{proof}

\subsection{Main result: Nonconvex convergence to a neighborhood}
\begin{theorem}[Stage-$m$, cluster-$k$ nonconvex convergence of FedMediator]
\label{thm:main_nonconvex}
Suppose Assumptions~\ref{ass:smooth}--\ref{ass:hetero} hold.
Assume further that Assumptions~\ref{ass:dir_align}--\ref{ass:second_moment} hold for the \emph{actual} SSCA direction
$\Delta\widehat{\Theta}_{r}$ produced by SSCA at round $r$.
Choose the \emph{server} stepsize $\eta_{\mathrm{g}}>0$ such that
\begin{equation}
\gamma - \frac{L\,\eta_{\mathrm{g}}\,\lambda_{\mathrm{merge}}\,B}{2} ~>~ 0
\qquad \text{(e.g., } \eta_{\mathrm{g}} < \frac{2\gamma}{L\lambda_{\mathrm{merge}}B}\text{ when }B>0\text{)}.
\label{eq:app_stepsize_cond}
\end{equation}
Then for any number of aggregation rounds $R\ge 1$,
\begin{equation}
\frac{1}{R}\sum_{r=0}^{R-1}\mathbb{E}\big\|\nabla F(\Theta_r)\big\|^2
~\le~
\frac{2\big(F(\Theta_0)-F^*\big)}{\eta_{\mathrm{g}}\,\lambda_{\mathrm{merge}}\,
\Big(2\gamma - L\eta_{\mathrm{g}}\lambda_{\mathrm{merge}}B\Big)\,R}
~+~
\frac{L\,\eta_{\mathrm{g}}\,\lambda_{\mathrm{merge}}\,\sigma_{\mathrm{agg}}^2}{\Big(2\gamma - L\eta_{\mathrm{g}}\lambda_{\mathrm{merge}}B\Big)},
\label{eq:app_rate_basic}
\end{equation}
where $F^*:=\inf_{\Theta}F(\Theta)$.
\end{theorem}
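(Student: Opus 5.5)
The plan is a standard one-step descent argument on the server iterate, driven entirely by Assumptions~\ref{ass:smooth}, \ref{ass:dir_align} and \ref{ass:second_moment}. The local-drift machinery (Lemmas~\ref{lem:local_drift} and \ref{lem:drift_to_ssca}) is not needed here, because the theorem assumes the descent-correlation and second-moment bounds directly for the \emph{actual} SSCA direction $\Delta\widehat{\Theta}_r$.

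Write $\alpha := \eta_{\mathrm{g}}\lambda_{\mathrm{merge}}$, so that the server update \eqref{eq:app_server_update} reads $\Theta_{r+1}=\Theta_r+\alpha\Delta\widehat{\Theta}_r$. By the $L$-smoothness of $F=F^{(m,k)}$ (Assumption~\ref{ass:smooth}),
\[
F(\Theta_{r+1}) \le F(\Theta_r) + \alpha\langle \nabla F(\Theta_r),\Delta\widehat{\Theta}_r\rangle + \frac{L\alpha^2}{2}\|\Delta\widehat{\Theta}_r\|^2 .
\]
Take the conditional expectation given $\Theta_r$. The gradient is $\Theta_r$-measurable, so the inner product becomes $\langle \nabla F(\Theta_r),\mathbb{E}[\Delta\widehat{\Theta}_r\mid\Theta_r]\rangle$, which Assumption~\ref{ass:dir_align} bounds by $-\gamma\|\nabla F(\Theta_r)\|^2$. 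The quadratic term is bounded by $B\|\nabla F(\Theta_r)\|^2+\sigma_{\mathrm{agg}}^2$ via Assumption~\ref{ass:second_moment}. This yields
\[
\mathbb{E}[F(\Theta_{r+1})\mid\Theta_r] \le F(\Theta_r) - \alpha\Big(\gamma-\frac{L\alpha B}{2}\Big)\|\nabla F(\Theta_r)\|^2 + \frac{L\alpha^2}{2}\sigma_{\mathrm{agg}}^2 .
\]
The stepsize condition \eqref{eq:app_stepsize_cond} makes the coefficient $\alpha(\gamma - L\alpha B/2) = \frac{\alpha}{2}(2\gamma - L\alpha B)$ strictly positive.

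Next, take total expectations and sum over $r=0,\dots,R-1$; the sum telescopes. Lower-bound $\mathbb{E}F(\Theta_R)$ by $F^*$, which is assumed finite since otherwise the bound is vacuous. Then divide by $\frac{\alpha}{2}(2\gamma-L\alpha B)R$. The telescoped part gives $\frac{2(F(\Theta_0)-F^*)}{\alpha(2\gamma-L\alpha B)R}$. The noise part gives $\frac{R\cdot L\alpha^2\sigma_{\mathrm{agg}}^2/2}{\alpha(2\gamma-L\alpha B)R/2} = \frac{L\alpha\sigma_{\mathrm{agg}}^2}{2\gamma-L\alpha B}$. Substituting $\alpha=\eta_{\mathrm{g}}\lambda_{\mathrm{merge}}$ reproduces \eqref{eq:app_rate_basic} exactly.

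There is no substantive obstacle, since the hard content is packaged into the assumptions. The one point that needs care is the conditioning. Assumptions~\ref{ass:dir_align}--\ref{ass:second_moment} are stated conditionally on $\Theta_r$, and they must be applied before taking the outer expectation; this is what lets the bilinear term factor through $\mathbb{E}[\Delta\widehat{\Theta}_r\mid\Theta_r]$. Integrability also has to be checked: the finiteness clause of Assumption~\ref{ass:second_moment}, combined with smoothness, makes $\mathbb{E}F(\Theta_{r})$ finite by induction, so the telescoping is legitimate.
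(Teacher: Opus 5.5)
Your proposal is correct and follows the paper's proof essentially step for step. Both arguments expand the server update via $L$-smoothness, condition on $\Theta_r$ to apply Assumptions~\ref{ass:dir_align}--\ref{ass:second_moment}, telescope over $r$, use $F(\Theta_R)\ge F^*$, and divide by $\eta_{\mathrm{g}}\lambda_{\mathrm{merge}}\bigl(\gamma - L\eta_{\mathrm{g}}\lambda_{\mathrm{merge}}B/2\bigr)R$; your added remarks on finiteness of $F^*$ and integrability of $F(\Theta_r)$ are points the paper leaves implicit and do not change the argument.
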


\begin{proof}
By $L$-smoothness (Assumption~\ref{ass:smooth}) and the update \eqref{eq:app_server_update},
\begin{align}
F(\Theta_{r+1})
&\le
F(\Theta_r)
+
\big\langle \nabla F(\Theta_r),~ \Theta_{r+1}-\Theta_r \big\rangle
+\frac{L}{2}\big\|\Theta_{r+1}-\Theta_r\big\|^2 \nonumber\\
&=
F(\Theta_r)
+
\eta_{\mathrm{g}}\lambda_{\mathrm{merge}}
\big\langle \nabla F(\Theta_r),~ \Delta\widehat{\Theta}_{r} \big\rangle
+
\frac{L\eta_{\mathrm{g}}^2\lambda_{\mathrm{merge}}^2}{2}\big\|\Delta\widehat{\Theta}_{r}\big\|^2.
\label{eq:app_smooth_step}
\end{align}
Take conditional expectation w.r.t.\ $\Theta_r$ and apply Assumptions~\ref{ass:dir_align}--\ref{ass:second_moment}:
\begin{align}
\mathbb{E}\big[F(\Theta_{r+1}) \mid \Theta_r\big]
&\le
F(\Theta_r)
-
\eta_{\mathrm{g}}\lambda_{\mathrm{merge}}\gamma\|\nabla F(\Theta_r)\|^2
+\frac{L\eta_{\mathrm{g}}^2\lambda_{\mathrm{merge}}^2}{2}\Big(B\|\nabla F(\Theta_r)\|^2+\sigma_{\mathrm{agg}}^2\Big)
\nonumber\\
&=
F(\Theta_r)
-
\eta_{\mathrm{g}}\lambda_{\mathrm{merge}}
\Big(\gamma-\frac{L\eta_{\mathrm{g}}\lambda_{\mathrm{merge}}B}{2}\Big)\,
\|\nabla F(\Theta_r)\|^2
+
\frac{L\eta_{\mathrm{g}}^2\lambda_{\mathrm{merge}}^2}{2}\sigma_{\mathrm{agg}}^2.
\label{eq:app_descent_ineq}
\end{align}
Now take total expectation and sum \eqref{eq:app_descent_ineq} over $r=0,\dots,R-1$:
\begin{equation}
\eta_{\mathrm{g}}\lambda_{\mathrm{merge}}
\Big(\gamma-\frac{L\eta_{\mathrm{g}}\lambda_{\mathrm{merge}}B}{2}\Big)
\sum_{r=0}^{R-1}\mathbb{E}\|\nabla F(\Theta_r)\|^2
\le
F(\Theta_0)-\mathbb{E}F(\Theta_R)
+
\frac{L\eta_{\mathrm{g}}^2\lambda_{\mathrm{merge}}^2}{2}R\sigma_{\mathrm{agg}}^2.
\label{eq:app_telescope}
\end{equation}
Using $F(\Theta_R)\ge F^*$ and dividing both sides by
$\eta_{\mathrm{g}}\lambda_{\mathrm{merge}}\big(\gamma-\frac{L\eta_{\mathrm{g}}\lambda_{\mathrm{merge}}B}{2}\big)R$ yields \eqref{eq:app_rate_basic}.
\end{proof}

\begin{remark}[Choosing stepsizes]
Condition \eqref{eq:app_stepsize_cond} ensures a \emph{positive descent coefficient}.
A sufficient choice is $\eta_{\mathrm{g}} = O\!\left(\frac{1}{\lambda_{\mathrm{merge}}}\right)$, and in particular
$\eta_{\mathrm{g}} < \frac{2\gamma}{L\lambda_{\mathrm{merge}}B}$ when $B>0$.
\end{remark}

\begin{remark}[From stage-wise to multi-stage (MCFT cycles)]
Theorem~\ref{thm:main_nonconvex} is stage-wise (fixed active modality $m$).
Under a cyclic schedule where each stage is activated regularly, one can sum the per-stage descent inequalities over a full cycle and obtain an analogous bound on the cycle-averaged gradient norm, with constants depending on the number of stages.
\end{remark}

\begin{remark}[A sufficient condition for Assumption~\ref{ass:second_moment} under periodic local training]
Theorem~\ref{thm:main_nonconvex} assumes Assumption~\ref{ass:second_moment} for the actual SSCA direction.
When periodic local training is used, one convenient way to verify this assumption is to start from the ideal SSCA direction
$\Delta\widehat{\Theta}_{r}^{\,\mathrm{ideal}}$ in \eqref{eq:app_ssca_ideal_output} and control the additional drift-induced error via
Assumption~\ref{ass:ssca_stability} and Lemma~\ref{lem:drift_to_ssca}.
In particular, if Assumptions~\ref{ass:dir_align}--\ref{ass:second_moment} hold for $\Delta\widehat{\Theta}_{r}^{\,\mathrm{ideal}}$ with
\begin{align}
\Big\langle \nabla F(\Theta_r),~ \mathbb{E}[\Delta\widehat{\Theta}_{r}^{\,\mathrm{ideal}}\mid \Theta_r]\Big\rangle
&\le -\gamma \|\nabla F(\Theta_r)\|^2, \label{eq:app_dir_align_ideal}\\
\mathbb{E}\big[\|\Delta\widehat{\Theta}_{r}^{\,\mathrm{ideal}}\|^2\mid \Theta_r\big]
&\le B\|\nabla F(\Theta_r)\|^2 + \sigma_{\mathrm{ssca}}^2, \label{eq:app_second_moment_ideal}
\end{align}
then the actual SSCA direction satisfies
\begin{equation}
\mathbb{E}\big[\|\Delta\widehat{\Theta}_{r}\|^2\mid \Theta_r\big]
~\le~
2B\|\nabla F(\Theta_r)\|^2
~+~
2\sigma_{\mathrm{ssca}}^2
~+~
2\,C_{\mathrm{ssca}}\Big(
c_1\,\eta^2\,\mathcal{E}(\mathcal{E}-1)\,\sigma^2
+
c_2\,\eta^2\,\mathcal{E}^2\,\zeta^2
\Big),
\label{eq:app_second_moment_with_drift}
\end{equation}
where $\eta$ is the \emph{local} SGD stepsize used in \eqref{eq:app_local_sgd}, and $\eta_{\mathrm{g}}$ is the \emph{server} stepsize used in \eqref{eq:app_server_update}.
Thus Assumption~\ref{ass:second_moment} holds for $\Delta\widehat{\Theta}_{r}$ with updated constants
$B \leftarrow 2B$ and
$\sigma_{\mathrm{agg}}^2 \leftarrow 2\sigma_{\mathrm{ssca}}^2 + 2\,C_{\mathrm{ssca}}\big(c_1\,\eta^2\,\mathcal{E}(\mathcal{E}-1)\,\sigma^2 + c_2\,\eta^2\,\mathcal{E}^2\,\zeta^2\big)$.
\end{remark}

\paragraph{Refinement: Separating SSCA Aggregation Noise and Drift.}
If desired, one can keep $\sigma_{\mathrm{ssca}}^2$ (pure aggregation noise) and the drift term separate in \eqref{eq:app_rate_basic},
by replacing $\sigma_{\mathrm{agg}}^2$ with $\sigma_{\mathrm{ssca}}^2$ and adding an extra term proportional to
$\eta_{\mathrm{g}}\lambda_{\mathrm{merge}}\big(c_1\,\eta^2\,\mathcal{E}(\mathcal{E}-1)\sigma^2+c_2\,\eta^2\,\mathcal{E}^2\zeta^2\big)$.
This yields the same qualitative conclusion: periodic aggregation ($\mathcal{E}>1$) increases the neighborhood radius through the drift term.

\newpage
\section{Pseudo-code}\label{app:code}
This appendix provides the pseudo-code of \textsc{FedMChain}, including the chain-of-modalities training schedule, the client-side \textsc{LocalUpdate}, and the server-side \textsc{SSCA} aggregation used in each modality stage.

\begin{algorithm}[H]
\caption{\textsc{FedMChain}}
\begin{algorithmic}[1]
\State \textbf{Input:} Clients $\mathcal{C}=\{1,\ldots,C\}$ with local datasets $\{D_i\}_{i\in\mathcal{C}}$; global modality set $\mathcal{M}$; modality-specific global parameters $\{\Theta^{(m)}_{G}\}_{m\in\mathcal{M}}$; local learning rate $\eta$; local steps $\mathcal{E}$; per-modality rounds $\{R_m\}_{m\in\mathcal{M}}$; SSCA hyper-parameters $\kappa$ (sparsification), $K$ (number of clusters), $\pi$ (threshold), $\lambda_{\mathrm{merge}}$ (merge rate); regularizer weights $\lambda_a,\lambda_c$ and temperature $\tau$.
\State \textbf{Output:} Trained $\{\Theta^{(m)}_{G}\}_{m\in\mathcal{M}}$.

\Statex
\State Initialize $\{\Theta^{(m)}_{G}\}_{m\in\mathcal{M}}$
\For{each modality $m \in \mathcal{M}$} \textcolor{blue}{\Comment{Modality-Chained schedule}}
    \For{$r = 1,2,\ldots,R_m$}
        \State Server samples participating clients $\mathcal{S}^{(m)}_r \subseteq \mathcal{C}$
        \State Broadcast $\Theta^{(m)}_{G}$ to all $i\in \mathcal{S}^{(m)}_r$
        \For{each client $i \in \mathcal{S}^{(m)}_r$ \textbf{in parallel}}
            \State $\Delta \Theta^{(m)}_{i} \gets \textsc{LocalUpdate}(i, m, \Theta^{(m)}_{G}, \eta, \mathcal{E}, \lambda_a,\lambda_c,\tau)$
            \State Send $\Delta \Theta^{(m)}_{i}$ and $w_i$ to server \textcolor{blue}{\Comment{$w_i$ as in Eq.~\ref{eq:10}}}
        \EndFor
        \State $\Delta \widehat{\Theta}^{(m)} \gets \textsc{SSCA}(\{\Delta \Theta^{(m)}_{i}, w_i\}_{i\in\mathcal{S}^{(m)}_r}; \kappa,K,\pi)$
        \State $\Theta^{(m)}_{G} \gets \Theta^{(m)}_{G} + \lambda_{\mathrm{merge}} \cdot \Delta \widehat{\Theta}^{(m)}$ 
    \EndFor
\EndFor

\Statex
\State \textsc{\textbf{LocalUpdate}}$(i, m, \Theta^{(m)}_{G}, \eta, \mathcal{E}, \lambda_a,\lambda_c,\tau)$:
    \State $\Theta^{(m)}_{i} \gets \Theta^{(m)}_{G}$
    \For{$e=1,2,\ldots,\mathcal{E}$}
        \State Sample mini-batch $B_i \subset D_i$
        \State Freeze all modality branches except modality $m$
        \State Update modality-$m$ parameters by minimizing the stage objective in Eq.~\ref{eq:4}--~\ref{eq:8}
        \State $\Theta^{(m)}_{i} \gets \Theta^{(m)}_{i} - \eta \nabla_{\Theta^{(m)}_{i}} \mathcal{L}^{(m)}(B_i)$
    \EndFor
    \State $\Delta \Theta^{(m)}_{i} \gets \Theta^{(m)}_{i} - \Theta^{(m)}_{G}$
    \State \Return $\Delta \Theta^{(m)}_{i}$

\Statex
\State \textsc{\textbf{SSCA}}$(\{\Delta \Theta^{(m)}_{i}, w_i\}_{i\in\mathcal{S}^{(m)}}; \kappa,K,\pi)$:
    \State Compute sparse sign vectors $s_i = \mathrm{sign}\!\big(T(\Delta \Theta^{(m)}_{i}, \kappa)\big)$ for all $i\in\mathcal{S}$
    \State Cluster $\{s_i\}$ into $\{\mathcal{C}_1,\ldots,\mathcal{C}_K\}$
    \State Compute cluster consensuses $\{\Delta \bar{\Theta}^{(m,k)}\}_{k=1}^{K}$ by weighted mean \textcolor{blue}{\Comment{Eq.~\ref{eq:10}}}
    \State Compute $\rho_q$, $\mu_q$, and $d_q$ coordinate-wisely \textcolor{blue}{\Comment{Eq.~\ref{eq:11}--~\ref{eq:14}}}
    \State Merge cluster consensuses into $\Delta \widehat{\Theta}^{(m)}$ using Eq.~\ref{eq:15}--~\ref{eq:16}
    \State \Return $\Delta \widehat{\Theta}^{(m)}$



\Statex
\State \Return $\{\Theta^{(m)}_{G}\}_{m\in\mathcal{M}}$
\end{algorithmic}
\end{algorithm}



\end{document}